\newcounter{remark}[section]
\newtheorem{bigthm}{Theorem}
\newtheorem{thm}{Theorem}
\newtheorem{cor}[thm]{Corollary}
\newtheorem{lemma}[thm]{Lemma}
\newtheorem{prop}[thm]{Proposition}
\DeclareMathAlphabet{\mathsfsl}{OT1}{cmss}{m}{sl}
\newcommand{\term}{\emph}
\newcommand{\cnst}[1]{\mathrm{#1}}
\renewcommand{\phi}{\varphi}
\newcommand{\Id}{\mathbf{I}}
\newcommand{\coll}[1]{\mathscr{#1}}
\newcommand{\Cspace}[1]{\mathbb{C}^{#1}}
\newcommand{\abs}[1]{\left\vert {#1} \right\vert}
\newcommand{\abssq}[1]{{\abs{#1}}^2}
\newcommand{\Prob}[1]{\mathbb{P}\left\{ {#1} \right\}}
\newcommand{\vct}[1]{\bm{#1}}
\newcommand{\mtx}[1]{\bm{#1}}
\newcommand{\adj}{*}
\newcommand{\psinv}{\dagger}
\newcommand{\range}{\operatorname{range}}
\newcommand{\rank}{\operatorname{rank}}
\newcommand{\trace}{\operatorname{trace}}
\newcommand{\supp}[1]{\operatorname{supp}(#1)}
\newcommand{\ip}[2]{\left\langle {#1},\ {#2} \right\rangle}
\newcommand{\absip}[2]{\abs{\ip{#1}{#2}}}
\newcommand{\norm}[1]{\left\Vert {#1} \right\Vert}
\newcommand{\normsq}[1]{\norm{#1}^2}
\newcommand{\smnorm}[2]{{\bigl\Vert {#2} \bigr\Vert}_{#1}}
\newcommand{\enorm}[1]{\norm{#1}_2}
\newcommand{\enormsq}[1]{\enorm{#1}^2}
\newcommand{\fnorm}[1]{\norm{#1}_{\mathrm{F}}}
\newcommand{\fnormsq}[1]{\fnorm{#1}^2}
\newcommand{\pnorm}[2]{\norm{#2}_{#1}}
\newcommand{\subjto}{\quad\text{subject to}\quad}
\newcommand{\bigO}{\mathrm{O}}
\newcommand{\atom}{\vct{\phi}}
\newcommand{\Fee}{\mtx{\Phi}}
\begin{document}
\title{The Sparsity Gap: \\ Uncertainty Principles Proportional to Dimension}

\author{\IEEEauthorblockN{Joel A.~Tropp}
\IEEEauthorblockA{Computing and Mathematical Sciences\\
California Institute of Technology\\
Pasadena, CA 91125--5000\\
Email: jtropp@acm.caltech.edu}}

\maketitle

\begin{abstract}
In an incoherent dictionary, most signals that admit a sparse representation admit a unique sparse representation.  In other words, there is no way to express the signal without using strictly more atoms. This work demonstrates that sparse signals typically enjoy a higher privilege: each nonoptimal representation of the signal requires far more atoms than the sparsest representation---unless it contains many of the same atoms as the sparsest representation.  One impact of this finding is to confer a certain degree of legitimacy on the particular atoms that appear in a sparse representation.  This result can also be viewed as an uncertainty principle for random sparse signals over an incoherent dictionary.  
\end{abstract}

\IEEEpeerreviewmaketitle

\section{Introduction}

The purpose of this paper is to develop a new class of uncertainty principles for sparse representation that hold even when the sparsity level approaches the ambient dimension.  We begin with a discussion of the background and related results before moving on to the new contributions.

\subsection{Sparse Representation in Dictionaries}

Let $\Fee$ be an $m \times N$ matrix with normalized columns:
$$
\enorm{ \atom_j } = 1,
\quad j = 1, 2, \dots, N.
$$
We refer to $\Fee$ as a \term{dictionary} and to its columns as \term{atoms}.  Assume the atoms span the ambient space $\Cspace{m}$.

There are two simple geometric quantities associated with a dictionary.  The first is a measure of \term{redundancy}:
$$
\rho = \norm{\Fee}^2,
$$
where $\norm{\cdot}$ denotes the spectral, or $\ell_2 \to \ell_2$ operator norm, of a matrix.  We always have $\rho \geq N/m$.  Equality holds if and only if $\Fee$ is a \term{tight frame}. 
The second quantity is the \term{coherence}:
$$
\mu = \max_{j \neq k} \absip{ \atom_j }{ \atom_k }.
$$
The coherence is small when the angle between each pair of atoms is large.  Strohmer and Heath~\cite{SH03:Grassmannian-Frames} have observed that 
\begin{equation} \label{eqn:grassmann}
\mu \geq \sqrt{\frac{N - m}{m(N-1)}}.
\end{equation}
In the typical case $N \geq 2m$, the inequality~\eqref{eqn:grassmann} indicates that the coherence cannot be very small: $\mu \gtrsim m^{-1/2}$.

Let $S$ be a subset of $\{1,2,\dots, N\}$, and define $\Fee_S$ to be the column submatrix of $\Fee$ whose columns are listed in $S$.  
We say that $S$ is \term{linearly independent} if it lists a linearly independent family of atoms.  Note that $\Fee_S$ is injective if and only if $S$ is linearly independent.  Suppose that a signal $\vct{u} \in \Cspace{m}$ can be written as
$$
\vct{u} = \Fee \vct{x}
\quad\text{where $\supp{\vct{x}} \subset S$}.
$$
We call the vector $\vct{x}$ a \term{representation} of the signal $\vct{u}$, and we say that $\vct{u}$ can be \term{represented with $S$.}  When $S$ is linearly independent, $\vct{x}$ is the unique representation of $\vct{u}$ over $S$.

In a redundant dictionary ($N > m$), each signal has an infinity of representations.  The sparse representation problem asks us to express $\vct{u}$ with the fewest number of atoms:
\begin{equation} \label{eqn:p0}
\min \pnorm{0}{\vct{z}}
\quad\subjto\quad
\vct{u} = \Fee \vct{z},
\end{equation}
where 
$\pnorm{0}{\cdot}$ counts the number of nonzero components in its argument.  If $\vct{x}$ is a minimizer of this mathematical program, the set $S = \supp{\vct{x}}$ must be linearly independent.  Otherwise, we could remove an atom to obtain a sparser representation.  As a result, when studying sparse representation, we focus on linearly independent sets of atoms.


\subsection{Uniqueness of Sparse Representations} \label{sec:uniqueness}

One might wonder when the problem~\eqref{eqn:p0} has a unique solution.  The sparse approximation literature took up this inquiry about ten years ago, although one can trace some of the ideas to the late 1980s~\cite{DS89:Uncertainty-Principles}. 
The early research led to the following result for deterministic signals.

\begin{prop} \label{prop:l0-uniqueness}
Assume that
\begin{equation} \label{eqn:l0-uniqueness}
\abs{S} < \frac{1}{2}( \mu^{-1} + 1 ).
\end{equation}
If a signal $\vct{u} = \Fee \vct{x}$ with $\supp{\vct{x}} \subset S$, then $\vct{x}$ is the unique minimizer of~\eqref{eqn:p0}.
\end{prop}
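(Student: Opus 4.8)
The plan is to argue by contradiction and reduce uniqueness to a statement about linear independence of small collections of atoms. Suppose $\vct{y}$ is any minimizer of~\eqref{eqn:p0}. Since $\vct{x}$ is feasible, $\pnorm{0}{\vct{y}} \leq \pnorm{0}{\vct{x}} \leq \abs{S}$. If $\vct{y} \neq \vct{x}$, then the difference $\vct{w} = \vct{x} - \vct{y}$ is a nonzero vector with $\Fee \vct{w} = \vct{0}$, and its support satisfies $\abs{\supp{\vct{w}}} \leq \pnorm{0}{\vct{x}} + \pnorm{0}{\vct{y}} \leq 2\abs{S}$. The goal is to show that this is impossible, because any $2\abs{S}$ atoms selected under the hypothesis~\eqref{eqn:l0-uniqueness} are linearly independent, which forbids such a nontrivial kernel vector.

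The heart of the argument is therefore the following linear-algebraic fact: any index set $T$ with $\abs{T} < \mu^{-1} + 1$ lists linearly independent atoms. To see this, I would form the Gram matrix $G = \Fee_T^{\adj} \Fee_T$. Because the atoms are normalized, $G$ has unit diagonal, and by the definition of coherence each off-diagonal entry has modulus at most $\mu$. Gershgorin's disc theorem then confines every eigenvalue of $G$ to a disc of radius $(\abs{T}-1)\mu$ about $1$; under the stated bound on $\abs{T}$ this radius is strictly less than $1$, so $G$ is positive definite. Hence $\Fee_T$ is injective, which is precisely the assertion that the atoms listed by $T$ are linearly independent.

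It remains to connect the arithmetic. Rearranging the hypothesis~\eqref{eqn:l0-uniqueness} gives $2\abs{S} - 1 < \mu^{-1}$, hence $2\abs{S} < \mu^{-1} + 1$, so the set $\supp{\vct{w}}$, of size at most $2\abs{S}$, falls under the linear-independence fact just established. But $\Fee \vct{w} = \vct{0}$ with $\vct{w} \neq \vct{0}$ exhibits a nontrivial linear dependence among exactly those atoms, a contradiction. Therefore no competing minimizer $\vct{y}$ can differ from $\vct{x}$, and $\vct{x}$ is the unique solution of~\eqref{eqn:p0}. The one place to be careful---and what I would flag as the main obstacle---is the factor $\tfrac{1}{2}$ in~\eqref{eqn:l0-uniqueness}: it is exactly what is needed to absorb the doubling $\abs{\supp{\vct{w}}} \leq 2\abs{S}$ that arises from comparing two representations, so that the Gershgorin radius is controlled for sets of size $2\abs{S}$ rather than merely $\abs{S}$.
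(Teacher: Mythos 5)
Your proof is correct. Note that the paper itself gives no proof of Proposition~\ref{prop:l0-uniqueness}: it is quoted from the literature, attributed to Donoho--Huo, Gribonval--Nielsen, and Donoho--Elad, with the last of these deriving it from a spark (Kruskal rank) bound. Your argument---subtract two representations to get a kernel vector supported on at most $2\abs{S}$ atoms, then show via Gershgorin on the Gram matrix that any set of fewer than $\mu^{-1}+1$ atoms is linearly independent---is exactly that standard spark-based route, and the arithmetic linking the factor $\tfrac{1}{2}$ in~\eqref{eqn:l0-uniqueness} to the doubling of the support is handled correctly. It is also worth observing that your Gershgorin step is the qualitative cousin of the paper's own quantitative rank estimate in Lemma~\ref{lem:simple-result}, which bounds $\rank(\Fee_R)$ from below by a norm-ratio argument rather than asserting full rank.
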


Donoho and Huo established this result for dictionaries consisting of two orthonormal bases~\cite{DH01:Uncertainty-Principles}; Gribonval and Nielsen proved that it holds for \emph{every} dictionary~\cite{GN03:Sparse-Representations}.  Subsequently, Donoho and Elad showed that Proposition~\ref{prop:l0-uniqueness} follows from a more general result, phrased in terms of the \term{Kruskal rank}, or \term{spark}, of the dictionary~\cite{DE03:Optimally-Sparse}.  Another line of work~\cite{EB02:Generalized-Uncertainty,GN03:Sparse-Representations} sharpened the condition~\eqref{eqn:l0-uniqueness} for dictionaries consisting of multiple orthonormal bases.  See~\cite{Tro08:Linear-Independence} for a detailed discussion of the spikes and sines dictionary.

The requirement~\eqref{eqn:l0-uniqueness} is very stringent: it typically demands that the sparsity level $\abs{ S } \lesssim \sqrt{m}$.  In spite of this apparent shortcoming, the condition~\eqref{eqn:l0-uniqueness} cannot be improved in general.  For example, when $m$ is a perfect square, the Dirac comb can be represented perfectly using $\sqrt{m}$ spikes or $\sqrt{m}$ sines~\cite{DS89:Uncertainty-Principles}.  To move past the square-root threshold, we must place additional restrictions on the sparse signals we are willing to consider.

A natural approach is to introduce some randomness.  Let $S$ be linearly independent, and let $\vct{x} \in \Cspace{S}$ be a random vector whose distribution is absolutely continuous with respect to the Lebesgue measure on $\Cspace{S}$.  We say that a random signal of the form $\vct{u} = \Fee_S \vct{x}$ is \term{generic}, and we refer to the (unique) representation of $\vct{u}$ over $S$ as the \emph{natural representation}.  


To obtain interesting uniqueness results for generic signals, we impose some additional hypotheses. 
We say that the dictionary $\Fee$ is a \term{weakly incoherent tight frame} if
\begin{equation}\label{eqn:weak-incoherence-intro}
\normsq{\Fee} = \frac{N}{m}
\quad\text{and}\quad
\mu \leq \frac{\cnst{c}}{\log N},
\end{equation}
where $\cnst{c}$ is an absolute constant.  Assume the sparsity level
\begin{equation} \label{eqn:s-bound-intro}
s \leq \frac{\cnst{c} m}{\log N}.
\end{equation}
In this setting, we have the following result.

\begin{prop} \label{prop:weak-unique}
Assume the dictionary $\Fee$ satisfies~\eqref{eqn:weak-incoherence-intro} and the sparsity $s$ satisfies~\eqref{eqn:s-bound-intro}.  Draw a uniformly random set $S$ of $s$ atoms from the dictionary.  Except with probability $\bigO(N^{-1})$, the following statement holds.

Let $\vct{u} = \Fee_S \vct{x}$ be a generic signal.  With probability one, the natural representation of $\vct{u}$ is the unique minimizer of~\eqref{eqn:p0}.
\end{prop}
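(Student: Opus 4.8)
The plan is to decouple the randomness in the signal from the randomness in the support. Fix a linearly independent set $S$ with $\abs{S}=s$ and put $V=\range(\Fee_S)$, an $s$-dimensional subspace; a generic signal $\vct{u}=\Fee_S\vct{x}$ is distributed absolutely continuously on $V$, and almost surely $\vct{x}$ has full support, so $\pnorm{0}{\vct{x}}=s$. I claim that, \emph{for this fixed $S$}, the natural representation is almost surely the unique minimizer of~\eqref{eqn:p0} if and only if no atom outside $S$ lies in $V$. Indeed, suppose $\vct{u}$ admits a competing representation $\vct{z}\ne\vct{x}$ with $\supp{\vct{z}}\subseteq T$ and $\abs{T}\le s$; then $\vct{u}\in\range(\Fee_T)\cap V$. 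Since $\dim\range(\Fee_T)\le\abs{T}\le s=\dim V$, this intersection equals $V$ only when $\abs{T}=s$, $T$ is linearly independent, and $\range(\Fee_T)=V$; otherwise it is a proper, hence Lebesgue-null, subspace of $V$. As there are finitely many candidate sets $T$, a generic $\vct{u}$ avoids every null intersection with probability one, so a competing representation occurs almost surely exactly when some $T\ne S$ with $\abs{T}=s$ has $\range(\Fee_T)=\range(\Fee_S)$. Because $S$ is linearly independent, such a $T$ exists iff some atom $\atom_l$ with $l\notin S$ lies in $\range(\Fee_S)$: one direction picks $l\in T\setminus S$, the other replaces a participating atom of $S$ by $\atom_l$. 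This reduces the proposition to a statement about the random set $S$ alone.

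It remains to show that, except with probability $\bigO(N^{-1})$, the random $S$ is linearly independent and satisfies $\atom_l\notin\range(\Fee_S)$ for every $l\notin S$. Let $\mtx{P}_S$ be the orthogonal projector onto $\range(\Fee_S)$. Since $\enorm{\atom_l}=1$, the membership $\atom_l\in\range(\Fee_S)$ is equivalent to $\enorm{\mtx{P}_S\atom_l}=1$, so it suffices to force $\enorm{\mtx{P}_S\atom_l}<1$ for all $l\notin S$. With $\mtx{G}_S=\Fee_S^{\adj}\Fee_S$ the Gram matrix and $\vct{c}_l=\Fee_S^{\adj}\atom_l$ the cross-correlation vector, a direct computation gives $\enormsq{\mtx{P}_S\atom_l}=\vct{c}_l^{\adj}\mtx{G}_S^{-1}\vct{c}_l\le\norm{\mtx{G}_S^{-1}}\,\enormsq{\vct{c}_l}$. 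Hence it is enough to establish two high-probability bounds: a \emph{conditioning} estimate $\norm{\mtx{G}_S^{-1}}\le 2$ (which in particular certifies linear independence of $S$), and a \emph{correlation} estimate $\enormsq{\vct{c}_l}<\tfrac{1}{2}$ simultaneously for all $l\notin S$.

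The correlation estimate is the easier half. Here $\enormsq{\vct{c}_l}=\sum_{k\in S}\abssqip{\atom_k}{\atom_l}$ is a sum of $s$ terms each bounded by $\mu^2$, and, using that $\Fee$ is a tight frame (so $\sum_{k\ne l}\abssqip{\atom_k}{\atom_l}=\rho-1=N/m-1$), its mean is of order $s/m\le\cnst{c}/\log N$. A Chernoff/Bernstein bound for bounded sums then yields $\Prob{\enormsq{\vct{c}_l}\ge\tfrac{1}{2}}\le\exp(-\Omega(\log^2 N))$, and a union bound over the at most $N$ indices $l$ still leaves total failure probability $\bigO(N^{-1})$. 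The standard comparison between uniform $s$-subsets and independent (Bernoulli) selection, which inflates probabilities by at most a polynomial factor, lets me pass freely between the two sampling models.

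The conditioning estimate is the crux and the main obstacle. Writing $\mtx{G}_S=\Id+\mtx{H}_S$, where $\mtx{H}_S$ is the principal submatrix of the hollow Gram matrix $\mtx{H}=\Fee^{\adj}\Fee-\Id$ indexed by $S$, the goal is $\norm{\mtx{H}_S}\le\tfrac{1}{2}$. A Gershgorin bound only gives $\norm{\mtx{H}_S}\le s\mu$, which is far too large, so the randomness of $S$ must be exploited through a matrix-probability argument: symmetrize and decouple $\mtx{H}_S=\mtx{R}\mtx{H}\mtx{R}$ (with $\mtx{R}$ a random coordinate projector) and control the expected spectral norm by a matrix Chernoff / noncommutative Khintchine estimate. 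The tight-frame hypothesis enters through $\norm{\mtx{H}}=\rho-1$ and the uniform column norm $\enorm{\mtx{H}\onevct_j}=\sqrt{\rho-1}$, while the hypotheses $\mu\le\cnst{c}/\log N$ and $s\le\cnst{c}\,m/\log N$ are exactly what push the resulting bound below $\tfrac{1}{4}$ in expectation (the dangerous term scales like $\sqrt{(s/m)\log N}\lesssim\sqrt{\cnst{c}}$). A concentration step upgrades this to $\norm{\mtx{H}_S}\le\tfrac{1}{2}$ except with probability $\bigO(N^{-1})$. Combining the two estimates gives $\enormsq{\mtx{P}_S\atom_l}<1$ for every $l\notin S$ on an event of probability $1-\bigO(N^{-1})$, which by the reduction of the first paragraph establishes the proposition.
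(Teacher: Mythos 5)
Your argument is correct and follows essentially the route the paper itself relies on: the paper offers no proof of this proposition, deferring to the references on conditioning of random subdictionaries and norms of random submatrices, and the two high-probability estimates your reduction needs---the inverse-Gram bound $\smnorm{}{(\Fee_S^\adj\Fee_S)^{-1}}\le 2$ and the cross-correlation bound $\max_{l\notin S}\enorm{\Fee_S^\adj\atom_l}\le 1/2$---are exactly the ones the paper later records as Proposition~\ref{prop:tropp} and invokes in Section~\ref{sec:weak-incoherence}. Your deterministic reduction (uniqueness fails almost surely iff some atom outside $S$ lies in $\range(\Fee_S)$) is the same genericity/rank argument the paper formalizes in Lemma~\ref{lem:subspace-cond} and Corollaries~\ref{cor:generic-signal}--\ref{cor:generic-ld}, and your sketched matrix-Chernoff step for $\smnorm{}{\Fee_S^\adj\Fee_S-\Id}$ is stated at the same level of detail as the paper's own citation of that result.
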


Roughly speaking, Proposition~\ref{prop:weak-unique} states that a generic sparse signal over a random set of atoms is unlikely to have any other representation that is equally sparse---even when the sparsity level is nearly proportional to the ambient dimension.  

Cand\`es and Romberg established the first theorem of this type in the specific case of the spikes and sines dictionary~\cite{CR06:Quantitative-Robust}.  Using different methods, the present author showed that analogous results hold for any strongly incoherent dictionary~\cite[Sec.~7]{Tro08:Conditioning-Random}.  The extension to weakly incoherent dictionaries requires additional ideas from~\cite[Sec.~5]{Tro08:Norms-Random}.  

\subsection{Uncertainty Principles}

Historically, the sparse approximation community has viewed uniqueness through the lens of \emph{uncertainty principles}.  Suppose that a signal has two (different) representations:
$$
\vct{u} = \Fee_S \vct{x} = \Fee_T \vct{y}.
$$
The Donoho--Elad dictionary uncertainty principle~\cite[Thms.~3 and~5]{DE03:Optimally-Sparse} states%
\footnote{Donoho and Elad express their uncertainty principle~\cite[Thm.~3]{DE03:Optimally-Sparse} in terms of the Kruskal rank of the dictionary, which is notoriously difficult to estimate.  The result quoted here provides the best general bound.}
that
\begin{equation} \label{eqn:donoho-elad}
\abs{S} + \abs{T} > \mu^{-1}.
\end{equation}
In particular, if a signal $\vct{u}$ can be represented with a set $S$ that satisfies~\eqref{eqn:l0-uniqueness}, then every alternative representation requires strictly more atoms.

Since the coherence usually satisfies $\mu \gtrsim m^{-1/2}$, the dictionary uncertainty principle only operates in the regime of very sparse representations: $\abs{S} \lesssim \sqrt{m}$.  Except for very structured (or very random) dictionaries, it does not seem possible to obtain dictionary uncertainty principles for arbitrary signals that hold at sparsity levels near the ambient dimension.


\subsection{The Sparsity Gap}

This paper describes uncertainty principles for \emph{generic signals}.  It is easy to appreciate why generic signals might behave better than adversarially chosen signals.  If there are small sets $S$ and $T$ of atoms for which $\range(\Fee_S)$ and $\range(\Fee_T)$ intersect, there exists a signal that has sparse representations over both $S$ and $T$.  (Witness the Dirac comb!)  On the other hand, it is hard for a generic signal to have two sparse representations because $\range(\Fee_T)$ rarely \emph{contains} $\range(\Fee_S)$!  This fact offers a plausible route to reach uncertainty principles at sparsity levels far greater than $\sqrt{m}$.

Our first result extends the dictionary uncertainty principle~\eqref{eqn:donoho-elad} to generic signals.  The proof appears in Section~\ref{sec:strong-incoherence}.

\begin{bigthm}[Sparsity Gap under Strong Incoherence] \label{thm:strong-intro}
Suppose that $S$ is linearly independent, and draw a generic signal $\vct{u}$ in $\range(\Fee_S)$.  Then, almost surely, we cannot represent $\vct{u}$ with a set $T$ disjoint from $S$ \emph{unless}
$$
\abs{S} + \abs{T} > \mu^{-1} \sqrt{ \abs{S} }.
$$
\end{bigthm}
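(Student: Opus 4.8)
The plan is to split the argument into a probabilistic reduction followed by a deterministic linear-algebra estimate. Write $s = \abs{S}$ and $t = \abs{T}$, and recall that $S$ is linearly independent, so $\Fee_S$ is injective and $\range(\Fee_S)$ has dimension $s$. The first step is to observe that genericity upgrades the mere \emph{existence} of an alternative representation into a full subspace containment. Fix any set $T$ disjoint from $S$. The signal $\vct{u} = \Fee_S \vct{x}$ is representable with $T$ precisely when $\vct{x}$ lands in the linear subspace $\{ \vct{z} \in \Cspace{S} : \Fee_S \vct{z} \in \range(\Fee_T) \}$. If $\range(\Fee_S) \not\subseteq \range(\Fee_T)$, this is a \emph{proper} subspace of $\Cspace{S}$, hence Lebesgue-null, so the absolutely continuous law of $\vct{x}$ assigns it probability zero. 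Taking a union bound over the finitely many subsets $T$ of $\{1,\dots,N\}$, I conclude that almost surely the only sets $T$ disjoint from $S$ with which $\vct{u}$ can be represented are those satisfying $\range(\Fee_S) \subseteq \range(\Fee_T)$.

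It then remains to prove a purely deterministic claim: if $S$ and $T$ are disjoint and $\range(\Fee_S) \subseteq \range(\Fee_T)$, then $s + t > \mu^{-1}\sqrt{s}$. The containment forces $\range(\Fee_{S \cup T}) = \range(\Fee_T)$, so $\rank \Fee_{S \cup T} \le t$. Consequently the Gram matrix $\mtx{G} = \Fee_{S \cup T}^{*} \Fee_{S \cup T}$, which is Hermitian and positive semidefinite of size $s+t$, has rank at most $t$ and therefore nullity at least $(s+t) - t = s$.

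Now I would exploit the coherence through the spectrum of $\mtx{G}$. Since the atoms are normalized, $\mtx{G} = \Id + \mtx{E}$, where $\mtx{E}$ is Hermitian with zero diagonal and with every off-diagonal entry bounded by $\abs{E_{jk}} \le \mu$. Each zero eigenvalue of $\mtx{G}$ corresponds to an eigenvalue $-1$ of $\mtx{E}$, so $-1$ is an eigenvalue of $\mtx{E}$ with multiplicity at least $s$. Because the squared Frobenius norm of a Hermitian matrix equals the sum of its squared eigenvalues, this multiplicity gives the lower bound $\fnormsq{\mtx{E}} \ge s$. On the other hand, counting off-diagonal entries gives $\fnormsq{\mtx{E}} \le (s+t)(s+t-1)\,\mu^2 < (s+t)^2 \mu^2$. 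Combining the two estimates yields $s < (s+t)^2 \mu^2$, equivalently $s + t > \mu^{-1}\sqrt{s}$, which is the desired inequality.

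The conceptual crux is the first step: recognizing that for a generic signal the relevant event is not ``$\range(\Fee_S)$ and $\range(\Fee_T)$ merely intersect''---which is all the deterministic Donoho--Elad principle can use---but the far stronger ``$\range(\Fee_S) \subseteq \range(\Fee_T)$''. This is exactly the extra leverage that should produce the gain over~\eqref{eqn:donoho-elad}, and I expect it to supply the factor $\sqrt{s}$. The deterministic step is then routine, provided one measures the rank deficiency with the Frobenius norm (equivalently $\trace \mtx{E}^2$) rather than attempting to extend the original rank-one argument; the only thing requiring care is the bookkeeping between the nullity of $\mtx{G}$ and the Frobenius norm of $\mtx{E}$.
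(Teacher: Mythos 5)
Your proof is correct, and it follows the same two-stage architecture as the paper (genericity reduces the question to a deterministic statement about subspaces; coherence then controls the Gram matrix of $S \cup T$), but the two halves are realized differently. Your probabilistic step is a direct, self-contained version of the paper's Lemma~\ref{lem:subspace-cond} and Corollaries~\ref{cor:generic-signal}--\ref{cor:generic-ld}: where the paper encodes the dichotomy as ``$\abs{T} < \rank(\Fee_R)$ with $R = S \cup T$,'' you work with the equivalent condition $\range(\Fee_S) \subseteq \range(\Fee_T)$, which is cleaner for the disjoint case and equally valid (your preimage-of-a-proper-subspace argument and the union bound are both sound, and you correctly avoid assuming $T$ is linearly independent). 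The real divergence is in the deterministic step. The paper applies the Schatten $S_1/S_2$ ratio bound $\rank(\mtx{G}) \geq \pnorm{S_1}{\mtx{G}}^2 / \fnormsq{\mtx{G}}$ (Corollary~\ref{cor:alon-bourgain}, Lemma~\ref{lem:simple-result}) to get $\rank(\Fee_R) \geq r/(1+(r-1)\mu^2)$, then unwinds Theorem~\ref{thm:simple-result} through Corollaries~\ref{cor:simple-result} and~\ref{cor:tropp-up}. You instead convert the rank deficiency into an eigenvalue count: nullity at least $s$ for $\mtx{G}$ forces the eigenvalue $-1$ of $\mtx{E} = \mtx{G} - \Id$ with multiplicity at least $s$, so $\fnormsq{\mtx{E}} \geq s$, against the entrywise bound $\fnormsq{\mtx{E}} \leq (s+t)(s+t-1)\mu^2$. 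Implicitly this is the rank bound $\rank(\mtx{G}) \geq r\bigl(1 - (r-1)\mu^2\bigr)$, slightly weaker than the paper's $r/(1+(r-1)\mu^2)$, but it yields the stated inequality with none of the ``considerable amount of algebra'' the paper needs to pass from Theorem~\ref{thm:simple-result} to Corollary~\ref{cor:tropp-up}; and it extends verbatim to overlapping $T$ (nullity at least $s - \delta$ gives exactly Corollary~\ref{cor:tropp-up}). What you give up is the sharper intermediate overlap condition~\eqref{eqn:overlap-bd} of Theorem~\ref{thm:simple-result}, which is not needed for the statement at hand.
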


When $\abs{S} = 1$, our result coincides with the dictionary uncertainty principle~\eqref{eqn:donoho-elad}, but it becomes increasingly strict requirements as the sparsity level $\abs{S}$ increases!  Indeed, an equivalent condition is
$$
\abs{T} > \left( \frac{\mu^{-1}}{\sqrt{\abs{S}}} - 1 \right) \cdot \abs{S},
$$
so a generic signal that uses $\abs{S} \ll \mu^{-2}$ atoms cannot be represented with any disjoint set $T$ of atoms unless $\abs{T} \gg \abs{S}$.  
In the extreme case where $\mu^{-2} \sim m$, we obtain an uncertainty principle that operates at sparsity levels proportional to the ambient dimension!

Our second result is an uncertainty principle that parallels Proposition~\ref{prop:weak-unique}, just as Theorem~\ref{thm:strong-intro} parallels Proposition~\ref{prop:l0-uniqueness}.  This proof appears in Section~\ref{sec:weak-incoherence}.

\begin{bigthm}[Sparsity Gap under Weak Incoherence] \label{thm:weak-intro}
Assume that $\Fee$ is a weakly incoherent tight frame that satisfies~\eqref{eqn:weak-incoherence-intro}, and assume further that $N > 2m$.  Suppose that $S$ is a randomly chosen set of $s$ atoms, where $s$ satisfies~\eqref{eqn:s-bound-intro}.  Except with probability $\bigO(N^{-1})$, the following holds.

Draw a generic signal $\vct{u}$ in $\range(\Fee_S)$.  Then, almost surely, $\vct{u}$ cannot be represented with a set $T$ disjoint from $S$ \emph{unless}
$$
\abs{T} > \left( 1 + \frac{2}{\rho} \right) \cdot \abs{S}. 
$$
The redundancy $\rho = N/m$, by hypothesis.
\end{bigthm}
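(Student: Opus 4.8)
The plan is to reduce the probabilistic assertion to a deterministic statement about ranges of submatrices, and then to certify that statement uniformly over the adversarial set $T$ by invoking the conditioning theory for random subdictionaries. First I would dispose of the quantifier over the generic signal. Fix $S$ and a disjoint set $T$, and recall that $\vct u=\Fee_S\vct x$ is distributed absolutely continuously on $\range(\Fee_S)$. The signals in $\range(\Fee_S)$ representable with $T$ are exactly those in $\range(\Fee_S)\cap\range(\Fee_T)$, a subspace of $\range(\Fee_S)$; it carries positive measure there precisely when it is all of $\range(\Fee_S)$, that is, when $\range(\Fee_S)\subseteq\range(\Fee_T)$. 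As there are only finitely many candidate $T$, a union bound shows that, conditional on $S$, the generic signal almost surely evades every $T$ with $\range(\Fee_S)\not\subseteq\range(\Fee_T)$. It therefore suffices to prove that, except with probability $\bigO(N^{-1})$ over $S$, \emph{every} disjoint $T$ with $\abs T\le(1+2/\rho)\abs S$ satisfies $\range(\Fee_S)\not\subseteq\range(\Fee_T)$. Pruning $T$ to a spanning independent subset, we may assume $T$ is linearly independent, so that $\range(\Fee_S)\subseteq\range(\Fee_T)$ is equivalent to the rank deficiency $\rank\{\mtx Q\atom_j:j\in T\}\le\abs T-\abs S$, where $\mtx Q=\Id-\mtx P$ projects onto $\range(\Fee_S)^{\perp}$.

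The tight-frame hypothesis supplies the governing structure. From $\Fee\Fee^{\adj}=\rho\Id$ and $\mtx Q\Fee_S=\zeromtx$ one obtains $\sum_{j\notin S}(\mtx Q\atom_j)(\mtx Q\atom_j)^{\adj}=\rho\,\mtx Q$, so the projected atoms $\{\mtx Q\atom_j:j\notin S\}$ form a tight frame for $\range(\Fee_S)^{\perp}$ with constant $\rho$. The rank deficiency forces all of $\{\mtx Q\atom_j:j\in T\}$ into a subspace $W$ of dimension at most $\abs T-\abs S$, and the frame identity caps the energy that can land there: $\sum_{j\in T}\enormsq{\mtx Q\atom_j}\le\rho\dim W\le\rho(\abs T-\abs S)$. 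Since $\sum_{j\in T}\enormsq{\mtx Q\atom_j}=\abs T-\sum_{j\in T}\enormsq{\mtx P\atom_j}$, the bad event forces the $S$-projection energy $\sum_{j\in T}\enormsq{\mtx P\atom_j}$ to be of order $\abs S$. This is the conceptual crux: the ceiling is linear in $\abs T-\abs S$ with the redundancy $\rho$ as proportionality constant, so ruling out the bad event produces a threshold of the form $(1+\Theta(1/\rho))\abs S$, and matching the precise constant $1+2/\rho$ is exactly what the careful estimates must deliver.

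The remaining, genuinely probabilistic, task is to show that this required accumulation of $S$-projection energy onto the adversarial $T$ cannot happen. Here I would call on the conditioning theory for random subdictionaries of a weakly incoherent tight frame: for $s\le\cnst c\,m/\log N$, except with probability $\bigO(N^{-1})$, the matrix $\Fee_S$ is well conditioned and, crucially, $\sup_{j\notin S}\enormsq{\mtx P\atom_j}$ is small. Feeding such a uniform bound into the energy inequality of the preceding paragraph drives $\abs T$ past $(1+2/\rho)\abs S$, the desired contradiction, and this is precisely the point at which the hypotheses $\mu\le\cnst c/\log N$ and $N>2m$ are consumed.

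I expect the uniform control of $\sup_{j\notin S}\enormsq{\mtx P\atom_j}$ under \emph{weak} incoherence to be the main obstacle. Because $\mu$ is only $\bigO(1/\log N)$ rather than $\bigO(1/\sqrt m)$, the naive coherence estimate $\enormsq{\mtx P\atom_j}\lesssim\abs S\,\mu^{2}$ is far too weak, and one must instead deploy the sharp suprema-of-random-matrices bounds of the companion work to govern this quantity together with the operator norms entering the rank computation. A second difficulty is that $T$ ranges over exponentially many sets, so the estimate must survive a union (or covering) over all candidate $T$ while retaining the $\bigO(N^{-1})$ failure probability and remaining sharp enough in its dependence on $\rho$ to recover the stated constant $1+2/\rho$ rather than a weaker multiple of $\abs S$.
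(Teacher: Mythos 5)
Your proposal follows essentially the same route as the paper. The reduction to the subspace condition $\range(\Fee_S)\not\subseteq\range(\Fee_T)$ is the paper's Lemma~\ref{lem:subspace-cond}/Corollary~\ref{cor:generic-signal}; passing to $\rank(\mtx{Q}\Fee_T)$ with $\mtx{Q}=\Id-\mtx{P}_S$ reproduces the Schur-complement splitting $\rank(\Fee_R)=\abs{S}+\rank((\Id-\mtx{P}_S)\Fee_T)$ of Lemma~\ref{lem:schur-rank-decomp}; and your tight-frame energy cap $\sum_{j\in T}\enormsq{\mtx{Q}\atom_j}\le\rho\dim W$ is exactly the Bourgain--Tzafriri estimate $\rank(\mtx{A})\ge\fnormsq{\mtx{A}}/\normsq{\mtx{A}}$ applied to $\mtx{Q}\Fee_T$ together with $\normsq{\mtx{Q}\Fee_T}\le\normsq{\Fee}=\rho$ (Lemma~\ref{lem:schur-comp-rank}). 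The probabilistic input is likewise the same: the companion-paper bounds on $\smnorm{}{\Fee_S^\psinv}$ and $\max_{v\notin S}\enorm{\Fee_S^\adj\atom_v}$ (Proposition~\ref{prop:tropp}). One worry you raise at the end is, however, unfounded: no union or covering argument over the exponentially many candidate sets $T$ is needed in the probabilistic part, because the only random event concerns the conditioning of $\Fee_S$ and the maximum of $\enormsq{\mtx{P}_S\atom_v}$ over the $N-s$ individual atoms outside $S$; once that single event holds, the rank bound is deterministic and uniform in $T$. The union over $T$ enters only where you already placed it in your first paragraph, as a finite union of null sets for the generic signal, which costs nothing.

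The genuine issue is the constant, which you defer to ``careful estimates'' that your framework cannot supply. Your energy inequality shows that the bad event forces $\abs{T}(1-\eps)\le\rho(\abs{T}-\abs{S})$ with $\eps=\max_{j\notin S}\enormsq{\mtx{P}_S\atom_j}$, hence rules out only those $T$ with $\abs{T}<\tfrac{\rho}{\rho-1+\eps}\abs{S}$. With the achievable value $\eps\le 1/2$ (from $\smnorm{}{\Fee_S^\psinv}^2\le 2$ and $\max_v\enorm{\Fee_S^\adj\atom_v}\le 1/2$) this threshold is $(1+\tfrac{1}{2\rho-1})\abs{S}$, and even with $\eps=0$ it cannot exceed $(1+\tfrac{1}{\rho-1})\abs{S}$, which for $\rho>2$ is strictly below the target $(1+\tfrac{2}{\rho})\abs{S}$. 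So no tuning of $\eps$ closes the gap, and as written your argument proves a weaker statement than the one announced. For what it is worth, chaining the paper's own Lemma~\ref{lem:schur-comp-rank} with the stated numerical bounds yields the identical threshold $(1+\tfrac{1}{2\rho-1})\abs{S}$, so the factor $2/\rho$ appearing in Corollary~\ref{cor:weak-incoherence-rank-2} and Theorem~\ref{thm:weak-intro} does not appear to follow from the intermediate results as stated; you are not missing an idea, but you should either report the constant your estimates actually deliver or identify where the extra factor is supposed to originate.
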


In words, Theorem~\ref{thm:weak-intro} considers a generic signal over a random set of atoms.  It is likely that every (disjoint) alternative representation requires a constant factor more atoms than the natural representation, where the extra factor decreases as the dictionary becomes more redundant.  We see that there is typically a \emph{sparsity gap} between the natural representation and the the sparsest representation that uses different atoms. 

This result provides an interesting guarantee for a huge class of dictionaries because of the weak bound for the incoherence.  On the other hand, it holds for a smaller class of signals than Theorem~\ref{thm:strong-intro} because we have randomized the set of atoms in addition to choosing generic coefficients.

\section{Rank and File}

Although a generic signal has many representations aside from the natural one, there is a large class of representations that we can almost surely rule out.  As a first step toward our main results, we develop an algebraic condition that describes which representations can and cannot occur.

To motivate the discussion, let us recall a standard argument for establishing dictionary uncertainty principles.  Suppose that both $S$ and $T$ are linearly independent.  A few moments of thought reveals that the following conditions are equivalent:
\begin{enumerate}
\item	We have $\range(\Fee_S) \cap \range(\Fee_T) = \emptyset$.
\item	The matrix $\Fee_R$ has full rank, where $R = S \cup T$.
\end{enumerate}
For a fixed set $S$, suppose that Condition 2) holds whenever $\abs{R} < r_{\star}$.  We conclude that, if there exists a signal that has representations over both $S$ and $T$, then $\abs{S} + \abs{T} \geq r_{\star}$.  Read the paper~\cite{DE03:Optimally-Sparse} to see this argument in action.

We can extend this methodology by \emph{quantifying} the rank of the matrix $\Fee_R$.  These bounds allow us to count how many extra atoms are needed to represent a generic sparse signal.

\begin{lemma} \label{lem:subspace-cond}
Suppose that both $S$ and $T$ are linearly independent.  The following conditions are equivalent.
\begin{enumerate}
\item	We have $\range(\Fee_S) \cap \range(\Fee_T) \subsetneq \range(\Fee_S)$.
\item	We have $\abs{T} < \rank( \Fee_R )$, where $R = S \cup T$.
\end{enumerate}
\end{lemma}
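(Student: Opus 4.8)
The plan is to reduce both conditions to a single inequality about the dimension of the intersection $\range(\Fee_S) \cap \range(\Fee_T)$, by way of the standard Grassmann dimension formula for the sum of two subspaces. Throughout, write $V = \range(\Fee_S)$ and $W = \range(\Fee_T)$. Since $S$ and $T$ are linearly independent, the listed atoms form bases for $V$ and $W$ respectively, so $\dim V = \abs{S}$ and $\dim W = \abs{T}$.

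First I would identify the rank of $\Fee_R$ geometrically. Taking the union of index sets corresponds to summing the spans, so the columns of $\Fee_R$ span $V + W$ when $R = S \cup T$, whence $\rank(\Fee_R) = \dim(V + W)$. Applying the dimension formula $\dim(V + W) = \dim V + \dim W - \dim(V \cap W)$ then yields
$$
\rank(\Fee_R) = \abs{S} + \abs{T} - \dim(V \cap W).
$$

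The remaining step is to translate each of the two conditions into a statement about $\dim(V \cap W)$. Condition 1) asserts that $V \cap W$ is a \emph{proper} subspace of $V$; because the containment $V \cap W \subseteq V$ always holds, this is equivalent to the strict inequality $\dim(V \cap W) < \dim V = \abs{S}$. Condition 2) asserts $\abs{T} < \rank(\Fee_R)$; substituting the displayed expression and cancelling $\abs{T}$ from both sides gives $\dim(V \cap W) < \abs{S}$ as well. The two conditions thus coincide, which is the claim.

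Honestly, I do not expect a genuine obstacle here: once the Grassmann formula is invoked, the result is bookkeeping. The one point that merits a word of care is that $S$ and $T$ are not assumed disjoint, so $\abs{R}$ may be strictly smaller than $\abs{S} + \abs{T}$. This causes no trouble, however, since the argument never uses $\abs{R}$ directly---it rests only on the identities $\rank(\Fee_R) = \dim(V + W)$ and $\dim V = \abs{S}$, both of which hold irrespective of any overlap between $S$ and $T$.
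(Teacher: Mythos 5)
Your proof is correct and follows essentially the same route as the paper: both invoke the Grassmann dimension formula for $\dim(V+W)$, identify $\rank(\Fee_R)=\dim(V+W)$ and $\dim V=\abs{S}$, $\dim W=\abs{T}$ via linear independence, and reduce both conditions to the same dimension inequality. The only cosmetic difference is that you cancel to $\dim(V\cap W)<\abs{S}$ on both sides while the paper rewrites Condition 1) as $\dim W<\dim(V+W)$; the substance is identical.
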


\begin{IEEEproof}
Define the subspaces $\coll{S} = \range(\Fee_S)$ and $\coll{T} = \range(\Fee_T)$, which implies that $\coll{S} + \coll{T} = \range(\Fee_R)$.  Note that $\coll{S} \cap \coll{T}$ is a proper subspace of $\coll{S}$ if and only if
\begin{equation} \label{eqn:subspace-*}
\dim( \coll{S} \cap \coll{T} ) < \dim( \coll{S} ) 
\end{equation}
The algebra of subspaces yields
$$
\dim( \coll{S} \cap \coll{T} ) = \dim(\coll{S}) + \dim(\coll{T}) - \dim( \coll{S} + \coll{T} ).
$$
Therefore, the condition~\eqref{eqn:subspace-*} is equivalent with
$$
\dim(\coll{T}) < \dim(\coll{S} + \coll{T}).
$$
Since $T$ is linearly independent, $\dim(\coll{T}) = \abs{T}$.  Meanwhile,
$$
\dim(\coll{S} + \coll{T}) = \dim( \range(\Fee_R) ) \\
	= \rank(\Fee_R).
$$
This is the required conclusion. 
\end{IEEEproof}

Let us translate the previous result from the language of subspaces to the language of probability.

\begin{cor} \label{cor:generic-signal}
Suppose that both $S$ and $T$ are linearly independent.  The following conditions are equivalent.
\begin{enumerate}
\item	A generic signal $\vct{u} = \Fee_S \vct{x}$ almost surely has no representation of the form $\vct{u} = \Fee_T \vct{y}$.

\item	We have $\abs{T} < \rank( \Fee_R )$, where $R = S \cup T$.
\end{enumerate}
\end{cor}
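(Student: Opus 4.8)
The plan is to reduce the probabilistic statement of the corollary to the purely geometric statement already settled in Lemma~\ref{lem:subspace-cond}. Since both $S$ and $T$ are linearly independent, that lemma tells me Condition 2) is equivalent to the subspace condition $\range(\Fee_S) \cap \range(\Fee_T) \subsetneq \range(\Fee_S)$. So it suffices to prove that Condition 1) is equivalent to this same subspace condition, after which the chain of equivalences closes automatically.

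First I would observe that a signal $\vct{u} = \Fee_S \vct{x}$ always lies in $\coll{S} = \range(\Fee_S)$. Therefore $\vct{u}$ admits a representation $\vct{u} = \Fee_T \vct{y}$ if and only if $\vct{u} \in \coll{T} = \range(\Fee_T)$, which is to say if and only if $\vct{u}$ lies in the intersection $\coll{S} \cap \coll{T}$. Thus Condition 1) asserts precisely that a generic $\vct{u}$ almost surely avoids the subspace $\coll{S} \cap \coll{T}$.

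Next I would split into the two cases furnished by the dichotomy in Lemma~\ref{lem:subspace-cond}. If the intersection equals all of $\coll{S}$, then \emph{every} signal in $\coll{S}$, generic or not, admits a representation over $T$, so Condition 1) fails exactly when the subspace condition fails. If instead $\coll{S} \cap \coll{T}$ is a proper subspace of $\coll{S}$, I would pull it back through the map $\Fee_S$, which is injective because $S$ is linearly independent, to obtain a proper subspace $\Fee_S^{-1}(\coll{S} \cap \coll{T})$ of $\Cspace{S}$. A proper linear subspace has Lebesgue measure zero, and the genericity hypothesis guarantees that $\vct{x}$ has a distribution absolutely continuous with respect to Lebesgue measure on $\Cspace{S}$. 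Hence $\Prob{\vct{x} \in \Fee_S^{-1}(\coll{S} \cap \coll{T})} = 0$, so $\vct{u} = \Fee_S \vct{x}$ almost surely avoids $\coll{S} \cap \coll{T}$, and Condition 1) holds.

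The argument is largely bookkeeping, and I do not anticipate a serious obstacle. The one step that demands care is the measure-theoretic claim: I must use the injectivity of $\Fee_S$ to ensure that the pullback of a proper subspace is again a proper subspace, so that it genuinely carries measure zero, and I must lean only on absolute continuity rather than on any particular distribution for $\vct{x}$. Everything else follows by combining this with Lemma~\ref{lem:subspace-cond}.
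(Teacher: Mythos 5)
Your proposal is correct and follows essentially the same route as the paper: reduce Condition 2) to the subspace condition via Lemma~\ref{lem:subspace-cond}, handle the degenerate case where $\range(\Fee_S) \subset \range(\Fee_T)$, and otherwise use injectivity of $\Fee_S$ together with absolute continuity to conclude that a proper subspace is hit with probability zero. The only cosmetic difference is that you pull the exceptional subspace back to the coefficient space $\Cspace{S}$, whereas the paper pushes the distribution of $\vct{x}$ forward to a measure on $\range(\Fee_S)$; these are equivalent formulations of the same argument.
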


\begin{IEEEproof}
Lemma~\ref{lem:subspace-cond} states that Condition 2) is the same as
\begin{equation} \label{eqn:range-*}
\range(\Fee_S) \cap \range(\Fee_T) \subsetneq \range(\Fee_S), 
\end{equation}
so we prove that Condition 1) is the same as \eqref{eqn:range-*}.  To that end, let $\vct{u} = \Fee_S \vct{x}$ be a generic signal, which means that $\vct{x}$ is absolutely continuous with respect to the Lebesgue measure on $\Cspace{S}$.  Let $\nu$ denote the Lebesgue measure on $\range(\Fee_S)$.

First, assume \eqref{eqn:range-*} holds.  A proper subspace has zero Lebesgue measure, so
$$
\nu( \range(\Fee_S) \cap \range(\Fee_T) ) = 0.
$$
The set $S$ is linearly independent, so $\Fee_S$ is injective.  As a result, the distribution of $\vct{u}$ is absolutely continuous with respect to $\nu$.  It follows immediately that
$$
\Prob{ \vct{u} \in \range(\Fee_S) \cap \range(\Fee_T) } = 0.
$$
We conclude that
$$
\Prob{ \vct{u} \in \range(\Fee_T) } = 0
$$
because the signal $\vct{u} \in \range(\Fee_S)$.

Conversely, suppose \eqref{eqn:range-*} is false.  Then $\range(\Fee_S) \subset \range(\Fee_T)$, so the signal $\vct{u}$ can be represented over $T$.
\end{IEEEproof}

It is convenient to remove the assumption of linear independence from the previous result.

\begin{cor} \label{cor:generic-ld}
Suppose that $S$ is linearly independent, and let $T$ be any other set of atoms.  Assume that
$$
\abs{T} < \rank( \Fee_{R} ),
\quad\text{where $R = S \cup T$.}
$$
Draw a generic signal $\vct{u} = \Fee_S \vct{x}$.  Then
$$
\Prob{ \vct{u} \in \range(\Fee_T) } = 0.
$$
\end{cor}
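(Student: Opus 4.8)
The plan is to reduce to Corollary~\ref{cor:generic-signal} by replacing $T$ with a maximal linearly independent subset of its atoms. First I would choose $T' \subseteq T$ indexing a basis for the column space of $\Fee_T$; since every atom is a nonzero (unit) vector, such a $T'$ exists, is linearly independent, and satisfies $\abs{T'} = \rank(\Fee_T) \leq \abs{T}$. The crucial feature of this choice is that passing to a maximal independent subfamily does not shrink the span, so $\range(\Fee_{T'}) = \range(\Fee_T)$.

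Next I would track the effect of this substitution on the joint rank. Setting $R' = S \cup T'$, the range identity $\range(\Fee_{T'}) = \range(\Fee_T)$ gives $\range(\Fee_S) + \range(\Fee_{T'}) = \range(\Fee_S) + \range(\Fee_T)$, hence $\range(\Fee_{R'}) = \range(\Fee_R)$ and therefore $\rank(\Fee_{R'}) = \rank(\Fee_R)$. Combining this with the hypothesis and the bound $\abs{T'} \leq \abs{T}$ yields
$$
\abs{T'} \leq \abs{T} < \rank(\Fee_R) = \rank(\Fee_{R'}).
$$

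At this point both $S$ and $T'$ are linearly independent and obey $\abs{T'} < \rank(\Fee_{R'})$, which is exactly Condition 2) of Corollary~\ref{cor:generic-signal}. That corollary then delivers $\Prob{\vct{u} \in \range(\Fee_{T'})} = 0$ for the generic signal $\vct{u} = \Fee_S \vct{x}$. Since $\range(\Fee_{T'}) = \range(\Fee_T)$, this is precisely the desired conclusion $\Prob{\vct{u} \in \range(\Fee_T)} = 0$.

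I do not anticipate a real obstacle: all of the content lives in the reduction, and the only point that warrants care is the rank identity $\rank(\Fee_{R'}) = \rank(\Fee_R)$, which follows immediately from the fact that replacing the columns of $\Fee_T$ by a basis of their span preserves the range and hence the sum with $\range(\Fee_S)$. Everything else is an application of the already-established Corollary~\ref{cor:generic-signal} to the linearly independent pair $(S, T')$.
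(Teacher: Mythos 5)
Your proposal is correct and follows essentially the same route as the paper: extract a maximal linearly independent subset $T'$ of $T$, observe that $\range(\Fee_{T'}) = \range(\Fee_T)$ and hence $\rank(\Fee_{R'}) = \rank(\Fee_R)$, and apply Corollary~\ref{cor:generic-signal} to the pair $(S, T')$ using $\abs{T'} \leq \abs{T} < \rank(\Fee_R)$. The only cosmetic difference is that you justify the rank identity explicitly via sums of ranges, where the paper simply asserts it from maximality.
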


\begin{proof}
When $T$ is linearly independent, the claim follows directly from Corollary~\ref{cor:generic-signal}.  Otherwise, extract a maximal linear independent subset $T'$ from $T$, and write $R' = S \cup T'$.  Apply the result to $T'$ to obtain the statement
$$
\abs{T'} < \rank(\Fee_{R'}) \quad\Longrightarrow\quad
\Prob{ \vct{u} \in \range(\Fee_{T'}) } = 0.
$$
Since $T'$ is maximal, $\rank(\Fee_{R'}) = \rank(\Fee_R)$ and also $\range(\Fee_{T'}) = \range(\Fee_T)$.  To complete the proof, note that the hypothesis $\abs{T} < \rank(\Fee_R)$ implies $\abs{T'} < \rank(\Fee_R)$ because $T' \subset T$.
\end{proof}

\section{Analytic Rank Estimates}

The main challenge is that we only possess analytic/geometric information about the dictionary, encapsulated in the redundancy $\rho$ and the coherence $\mu$.  But the rank is fundamentally an algebraic quantity.  Our approach will be to construct \emph{analytic} estimates for the rank that we can compute from the data at hand.

\subsection{Schatten Norms}

A primary tool is the Schatten class of matrix norms.
Let $\mtx{A}$ be a matrix, and write $\vct{\sigma}(\mtx{A})$ for the vector of singular values of $\mtx{A}$, arranged in weakly decreasing order.  The \term{Schatten $p$-norm} is defined as
$$
\pnorm{S_p}{\mtx{A}} = \pnorm{p}{ \vct{\sigma}(\mtx{A}) },
$$
where $\pnorm{p}{\cdot}$ is the usual $\ell_p$ vector norm.  In particular, $S_2$ is the Frobenius norm, and $S_\infty$ is the spectral norm.  The norm $S_1$ is often called the \term{trace norm} because
$$
\pnorm{S_1}{\mtx{A}} = \trace(\mtx{A})
\quad\text{when $\mtx{A}$ is psd}.
$$
The term \term{psd} abbreviates \term{positive semidefinite}.  For general matrices, the Frobenius norm is the only Schatten-class norm computable directly from the matrix entries:
$$
\fnorm{\mtx{A}} = \left[ \sum\nolimits_{jk} \abssq{a_{jk}} \right]^{1/2}.
$$




\subsection{Rank Bounds via Norm Ratios}

A simple but powerful method for estimating rank is to compare two different Schatten norms of the same matrix.

\begin{lemma}
Suppose that $p < q$.  For each matrix $\mtx{A}$,
$$
\rank(\mtx{A}) \geq \left[ \frac{ \pnorm{S_p}{\mtx{A}} }{ \pnorm{S_q}{\mtx{A}}} \right]^{pq/(q-p)}.
$$
\end{lemma}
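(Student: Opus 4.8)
The plan is to reduce the matrix inequality to an elementary comparison between the $\ell_p$ and $\ell_q$ norms of the singular value vector, and then to invoke Hölder's inequality. Write $r = \rank(\mtx{A})$, and let $\vct{\sigma} = \vct{\sigma}(\mtx{A})$ denote the vector of singular values. By the definition of rank, exactly $r$ of these singular values are nonzero; the Schatten norms depend only on this nonzero block, so $\pnorm{S_p}{\mtx{A}}$ and $\pnorm{S_q}{\mtx{A}}$ are simply the $\ell_p$ and $\ell_q$ norms of a vector supported on $r$ coordinates. It therefore suffices to prove the vector bound $\pnorm{p}{\vct{\sigma}} \leq r^{1/p - 1/q}\,\pnorm{q}{\vct{\sigma}}$ and then rearrange.

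First I would establish the general fact that $\pnorm{p}{\vct{x}} \leq r^{1/p - 1/q}\,\pnorm{q}{\vct{x}}$ for any vector $\vct{x}$ supported on at most $r$ coordinates, whenever $p < q$. The key step is to apply Hölder's inequality to the sum $\sum_i \abs{x_i}^p$, writing each summand as $\abs{x_i}^p \cdot 1$ and pairing the conjugate exponents $q/p$ and its dual $q/(q-p)$. This produces
$$
\sum_i \abs{x_i}^p \leq \Bigl(\sum_i \abs{x_i}^q\Bigr)^{p/q} \cdot r^{(q-p)/q},
$$
where the sum of $1$ over the $r$ support coordinates contributes the factor $r^{(q-p)/q}$. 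Raising both sides to the power $1/p$ gives exactly the claimed comparison, once one simplifies the resulting exponent on $r$ to $(q-p)/(pq) = 1/p - 1/q$.

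Finally I would combine the pieces. Applying the vector bound to the singular values yields $\pnorm{S_p}{\mtx{A}} \leq r^{1/p-1/q}\,\pnorm{S_q}{\mtx{A}}$, hence $r^{1/p-1/q} \geq \pnorm{S_p}{\mtx{A}} / \pnorm{S_q}{\mtx{A}}$. Raising both sides to the positive power $pq/(q-p) = (1/p - 1/q)^{-1}$ preserves the inequality and delivers the stated rank lower bound.

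The computation is essentially routine, so there is no serious obstacle; the only point requiring care is the exponent bookkeeping---correctly identifying $q/(q-p)$ as the Hölder conjugate of $q/p$, and confirming that $1/p - 1/q$ and its reciprocal $pq/(q-p)$ match the exponents in the statement. One should also dispatch the degenerate cases: if $\mtx{A} = \zeromtx$ both norms vanish and the bound holds trivially, and the case $q = \infty$ (the spectral norm) goes through verbatim with $\pnorm{\infty}{\cdot}$ in place of $\pnorm{q}{\cdot}$ and the exponent reducing to $p$.
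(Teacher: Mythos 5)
Your proof is correct and follows essentially the same route as the paper: both reduce the rank bound to the elementary comparison $\pnorm{p}{\vct{\sigma}} \leq r^{1/p - 1/q} \pnorm{q}{\vct{\sigma}}$ for the vector of nonzero singular values and then rearrange the exponents. The only difference is how that vector inequality is justified---you use H\"older's inequality with conjugate exponents $q/p$ and $q/(q-p)$, whereas the paper appeals to a Lagrange-multiplier argument showing the ratio is maximized at a constant vector; both are standard and correct.
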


\begin{proof}
For each vector $\vct{x} \in \Cspace{r}$, we have the inequality 
$$
\frac{\pnorm{p}{\vct{x}}}{\pnorm{q}{\vct{x}}} \leq r^{1/p - 1/q}.
$$
Indeed, one can use Lagrange multipliers to verify that the left-hand side is maximized when $\vct{x}$ is a constant vector.

Suppose that $\rank(\mtx{A}) = r$.  Then the vector $\vct{\sigma}$ of nonzero singular values of $\mtx{A}$ lies in $\Cspace{r}$.  By definition of the Schatten norms,
$$
\frac{\pnorm{S_p}{\mtx{A}}}{\pnorm{S_q}{\mtx{A}}}
	= \frac{\pnorm{p}{\vct{\sigma}}}{\pnorm{q}{\vct{\sigma}}}
	\leq r^{1/p - 1/q}.
$$
Take the $(1/p - 1/q)$ root and simplify the exponent to reach the conclusion.
\end{proof}

The following simple corollary is fantastically useful.

\begin{cor} \label{cor:alon-bourgain}
Let $\mtx{A}$ be a matrix.  Then
$$
\rank(\mtx{A}) \geq \frac{\pnorm{S_1}{\mtx{A}}^2}{ \fnormsq{\mtx{A}} }
\quad\text{and}\quad
\rank(\mtx{A}) \geq \frac{\fnormsq{\mtx{A}}}{\normsq{\mtx{A}}}.
$$
\end{cor}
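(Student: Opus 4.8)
The plan is to read off both inequalities as special cases of the preceding norm-ratio Lemma, choosing the exponents so that the cumbersome power $pq/(q-p)$ collapses to $2$ and so that the Schatten norms involved are the only two that we can actually evaluate from the matrix entries.

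For the first bound I would set $p = 1$ and $q = 2$. Then the exponent simplifies to $pq/(q-p) = (1 \cdot 2)/(2-1) = 2$, and since the Schatten $2$-norm is exactly the Frobenius norm, the Lemma yields
$$
\rank(\mtx{A}) \geq \left[ \frac{\pnorm{S_1}{\mtx{A}}}{\pnorm{S_2}{\mtx{A}}} \right]^2 = \frac{\pnorm{S_1}{\mtx{A}}^2}{\fnormsq{\mtx{A}}},
$$
which is the first claim. For the second bound I would take $p = 2$ and let $q \to \infty$. The exponent $pq/(q-p) = 2q/(q-2)$ tends to $2$ as $q \to \infty$, and the Schatten $\infty$-norm is the spectral norm, so the Lemma gives
$$
\rank(\mtx{A}) \geq \left[ \frac{\pnorm{S_2}{\mtx{A}}}{\pnorm{S_\infty}{\mtx{A}}} \right]^2 = \frac{\fnormsq{\mtx{A}}}{\normsq{\mtx{A}}},
$$
as desired.

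The one point requiring care, and the only real obstacle, is that the Lemma is stated for finite exponents $p < q$, so strictly speaking the choice $q = \infty$ lies outside its scope. I expect this to be entirely routine to dispatch, by either of two routes. The clean way is to pass to the limit $q \to \infty$: the exponent and both Schatten norms converge (the latter because $\pnorm{S_q}{\mtx{A}} \to \pnorm{S_\infty}{\mtx{A}} = \norm{\mtx{A}}$), so the inequality survives the limit. Alternatively, to avoid any limiting argument, I would simply prove the second bound from scratch in one line: writing $r = \rank(\mtx{A})$ and letting $\sigma_1, \dots, \sigma_r$ be the nonzero singular values, we have $\fnormsq{\mtx{A}} = \sum_{i} \sigma_i^2 \leq r \, \sigma_1^2 = r \, \normsq{\mtx{A}}$, and rearranging gives the claim. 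This direct computation makes transparent that both corollary inequalities are just the equality cases of $\ell_p$--$\ell_q$ norm comparisons on the singular-value vector (Cauchy–Schwarz for the first bound, bounding the sum of squares by $r$ times the maximum for the second), which is precisely the mechanism the Lemma packages in general form.
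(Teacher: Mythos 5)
Your proposal is correct and matches the paper's (implicit) proof exactly: the corollary is presented as an immediate specialization of the norm-ratio lemma with $(p,q) = (1,2)$ and $(p,q) = (2,\infty)$. Your attention to the fact that $q = \infty$ falls outside the lemma's literal hypothesis, with either the limiting argument or the one-line direct bound $\fnormsq{\mtx{A}} = \sum_i \sigma_i^2 \leq r\,\sigma_1^2$, is a sound way to close that small gap.
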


Alon has applied the first estimate 
in his work on extremal combinatorics~\cite{Alo03:Problems-Results}.  The second estimate
arises in a paper of Bourgain and Tzafriri on restricted invertibility~\cite{BT87:Invertibility-Large}.

\subsection{A Schur Complement Rank Identity}

Suppose that $\mtx{X}$ is a psd matrix, partitioned so that its diagonal blocks are square: 
$$
\mtx{X} = \begin{bmatrix} \mtx{A} & \mtx{B} \\ \mtx{B}^\adj & \mtx{C} \end{bmatrix}.
$$
Provided that the block $\mtx{A}$ is nonsingular, the \term{Schur complement} of $\mtx{A}$ in $\mtx{X}$ is the matrix
$$
\mtx{X}/\mtx{A} = \mtx{C} - \mtx{B} \mtx{A}^{-1}\mtx{B}^\adj.
$$
For our purposes, the relevant fact is that
\begin{equation} \label{eqn:schur-rank-ident}
\rank( \mtx{X} ) = \rank( \mtx{A} ) + \rank( \mtx{X} / \mtx{A} ).
\end{equation}
See~\cite[Sec.~2]{PSWZ07:Huas-Matrix} for more Schur complement identities.

\section{Sparsity Gap under Strong Incoherence} \label{sec:strong-incoherence}

Corollary~\ref{cor:generic-ld} indicates that we can obtain uncertainty principles for generic signals by developing lower bounds on the rank of a subdictionary $\Fee_R$. This section describes the simplest approach to this problem, which proceeds via Corollary~\ref{cor:alon-bourgain}.  This method is most effective when the coherence $\mu$ is small.

Let $R$ be a set of atoms.  Since
\begin{equation} \label{eqn:rank-gram}
\rank( \Fee_R^\adj \Fee_R ) = \rank( \Fee_R ),
\end{equation}
we may as well work with the Gram matrix of $\Fee_R$.  This substitution allows us to exploit geometric information about the dictionary.  Indeed, the diagonal entries of $\Fee_R^\adj \Fee_R$ equal one because the atoms have unit $\ell_2$ norm, and the off-diagonal entries are bounded in magnitude by $\mu$ because they contain the inner products between distinct atoms.

\begin{lemma} \label{lem:simple-result}
Let $R$ be a set of $r$ atoms.  Then
$$
\rank( \Fee_R ) \geq \frac{r}{1 + (r - 1) \mu^2}.
$$
\end{lemma}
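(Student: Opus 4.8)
The plan is to reduce the claim to the Gram matrix and then invoke the first rank estimate from Corollary~\ref{cor:alon-bourgain}. By~\eqref{eqn:rank-gram}, we have $\rank(\Fee_R) = \rank(\mtx{G})$, where $\mtx{G} = \Fee_R^\adj \Fee_R$ is the $r \times r$ Gram matrix of the subdictionary. This substitution is exactly what lets the geometry enter: $\mtx{G}$ is psd, its diagonal entries equal one because the atoms have unit $\ell_2$ norm, and each off-diagonal entry has magnitude at most $\mu$ by the definition of coherence.

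Next I would apply the bound $\rank(\mtx{G}) \geq \pnorm{S_1}{\mtx{G}}^2 / \fnormsq{\mtx{G}}$. The numerator is immediate: since $\mtx{G}$ is psd, its trace norm coincides with its trace, so $\pnorm{S_1}{\mtx{G}} = \trace(\mtx{G}) = r$. For the denominator I would split the Frobenius norm into its diagonal and off-diagonal contributions. The $r$ diagonal entries each contribute $1$, while the $r(r-1)$ off-diagonal entries each contribute at most $\mu^2$, so that
$$
\fnormsq{\mtx{G}} \leq r + r(r-1)\mu^2 = r\bigl(1 + (r-1)\mu^2\bigr).
$$

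Substituting these two facts into the rank estimate yields
$$
\rank(\Fee_R) = \rank(\mtx{G}) \geq \frac{r^2}{r\bigl(1 + (r-1)\mu^2\bigr)} = \frac{r}{1 + (r-1)\mu^2},
$$
which is the desired conclusion.

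I do not expect a genuine obstacle here; the result is a clean application of the machinery already assembled. The only two points that require a moment's care are recognizing that the \emph{first} (trace-norm) estimate of Corollary~\ref{cor:alon-bourgain}---rather than the spectral one---is the effective tool for a matrix with controlled entries, and that the psd structure of $\mtx{G}$ makes its trace norm exactly $r$. Everything else is a direct Frobenius-norm computation.
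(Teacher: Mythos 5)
Your proof is correct and follows exactly the same route as the paper: reduce to the Gram matrix via~\eqref{eqn:rank-gram}, apply the trace-norm estimate of Corollary~\ref{cor:alon-bourgain}, compute $\pnorm{S_1}{\Fee_R^\adj\Fee_R} = r$ from the unit diagonal, and bound the Frobenius norm by $r + r(r-1)\mu^2$. There is nothing to add.
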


\begin{IEEEproof}
Relation~\eqref{eqn:rank-gram} and Corollary~\ref{cor:alon-bourgain} imply that
\begin{equation} \label{eqn:rank-alon}
\rank(\Fee_R) = \rank(\Fee^\adj \Fee_R)
	\geq \frac{\pnorm{S_1}{\Fee_R^\adj\Fee_R}^2}{\fnormsq{\Fee_R^\adj \Fee_R}}.
\end{equation}
Owing to the properties of the Gram matrix,
$$
\pnorm{S_1}{ \Fee_R^\adj \Fee_R}^2 = (\trace \Fee_R^\adj \Fee_R )^2 = r^2
$$
and
$$
\fnormsq{ \Fee_R^\adj \Fee_R } \leq r + r(r-1)\mu^2
$$
Introduce these bounds into~\eqref{eqn:rank-alon} to complete the proof.
\end{IEEEproof}

When the coherence $\mu$ is small, Lemma~\ref{lem:simple-result} provides excellent estimates for the rank of $\Fee_R$.  By combining this bound with Corollary~\ref{cor:generic-ld}, we obtain our first main result.

\begin{thm} \label{thm:simple-result}
Suppose that $S$ indexes a linearly independent set of $s$ atoms and that $T$ lists $t$ atoms.  Assume that the size $\delta$ of their intersection
\begin{equation} \label{eqn:overlap-bd}
\delta = \abs{S \cap T} < s \left[1 - \frac{t-1}{s} \cdot \frac{t \mu^2}{1 - t\mu^2} \right].
\end{equation}
Let $\vct{u} = \Fee_S \vct{x}$ be a generic signal in $\range(\Fee_S)$.  Then
$$
\Prob{ \vct{u} \in \range(\Fee_T) } = 0.
$$
A fortiori, there is zero probability that $\vct{u}$ can be represented using \emph{any} set $T$ of $t$ atoms whose overlap with $S$ equals $\delta$.
\end{thm}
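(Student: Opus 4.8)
The plan is to reduce the probabilistic claim to a purely algebraic rank inequality and then discharge it with the rank estimate of Lemma~\ref{lem:simple-result}. By Corollary~\ref{cor:generic-ld}, it suffices to verify that $\abs{T} < \rank(\Fee_R)$, where $R = S \cup T$; note that this corollary already dispenses with any linear-independence hypothesis on $T$, so the theorem's more permissive setup causes no difficulty. Writing $r = \abs{R} = s + t - \delta$ by inclusion--exclusion, I would apply Lemma~\ref{lem:simple-result} to obtain
$$
\rank(\Fee_R) \geq \frac{r}{1 + (r-1)\mu^2}.
$$
The goal then becomes to show that the hypothesis on $\delta$ forces the right-hand side to exceed $t$.

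The core of the argument is a chain of equivalent rearrangements. Working in the regime $t\mu^2 < 1$ (which is implicit in the statement, since otherwise the bracketed quantity ceases to be meaningful and the coherence is far too large to be interesting anyway), the denominator is positive, so $\frac{r}{1 + (r-1)\mu^2} > t$ is equivalent to $r - t > t(r-1)\mu^2$. Substituting $r - t = s - \delta$ and $r - 1 = s + t - \delta - 1$, expanding, and collecting the terms proportional to $s - \delta$, I would arrive at
$$
(s - \delta)(1 - t\mu^2) > t(t-1)\mu^2,
$$
which, after dividing by the positive factor $1 - t\mu^2$, rearranges to exactly the displayed hypothesis $\delta < s\left[1 - \frac{t-1}{s}\cdot\frac{t\mu^2}{1-t\mu^2}\right]$. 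Hence the condition on the overlap is precisely equivalent to $\rank(\Fee_R) > t$, and Corollary~\ref{cor:generic-ld} delivers $\Prob{\vct{u} \in \range(\Fee_T)} = 0$.

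For the \emph{a fortiori} statement, I would observe that the bound just derived depends on $T$ only through the three integers $s$, $t$, and $\delta$. Thus every set $T$ of $t$ atoms with $\abs{S \cap T} = \delta$ satisfies the same hypothesis, so $\Prob{\vct{u} \in \range(\Fee_T)} = 0$ for each of them individually. Since the dictionary contains only $N$ atoms, there are finitely many such $T$, and a union bound over this finite collection of null events gives zero total probability that $\vct{u}$ lies in the range of any of them.

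I do not anticipate a serious obstacle: the only genuine content is Lemma~\ref{lem:simple-result}, and the remaining work is the bookkeeping needed to confirm that the slightly awkward bracketed expression in the hypothesis is the exact threshold at which the rank bound crosses $t$. The one point demanding care is the sign of $1 - t\mu^2$ when clearing denominators, since the direction of the inequality hinges on it; I would therefore state the regime $t\mu^2 < 1$ explicitly so that every step in the equivalence is reversible.
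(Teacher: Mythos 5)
Your proposal is correct and follows the paper's own argument essentially verbatim: reduce to the rank condition $t < \rank(\Fee_R)$ via Corollary~\ref{cor:generic-ld}, apply Lemma~\ref{lem:simple-result} with $r = s + t - \delta$, and rearrange to recover the overlap hypothesis, finishing with a union bound over the finitely many admissible $T$. Your explicit attention to the sign of $1 - t\mu^2$ when clearing denominators is a small point of added care that the paper leaves implicit.
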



\begin{IEEEproof}
Define the set $R = S \cup T$.  Observe that
$$
r := \abs{R} = \abs{S} + \abs{T} - \abs{S \cap T} = s + t - \delta.
$$
Corollary~\ref{cor:generic-ld} states that
$t < \rank( \Fee_R )$
implies
\begin{equation} \label{eqn:simple-*}
\Prob{ \vct{u} \in \range(\Fee_T) } = 0. 
\end{equation}
According to Lemma~\ref{lem:simple-result},
$$
\frac{r}{1 + (r-1)\mu^2} \leq \rank( \Fee_R ).
$$
Therefore, another sufficient condition for~\eqref{eqn:simple-*} is
$$
t < \frac{r}{1 + (r-1) \mu^2} = \frac{s + t - \delta}{1 + (s+t-\delta - 1) \mu^2}.
$$
Solving this relation for $\delta$ results in the bound
$$
\delta < s - \frac{(t-1) t \mu^2}{1 - t\mu^2}.
$$
If $\delta$ satisfies this condition, then \eqref{eqn:simple-*} is in force.

Since only a finite number of sets $T$ meet the hypotheses of the theorem, there is zero chance that $\vct{u}$ is representable using \emph{any} such set $T$ of atoms.
\end{IEEEproof}

In words, Theorem~\ref{thm:simple-result} states that a generic signal constructed using a set of $s$ atoms almost surely has no representation using another set of $t$ atoms unless there is a substantial overlap between the two sets.

For example, suppose $\mu \leq m^{-1/2}$, and let $\vct{u}$ be a generic signal in $\range(\Fee_S)$, where $\abs{S} \leq m/3$.  Take $t = s$ in the overlap condition~\eqref{eqn:overlap-bd} to see that, almost surely, \emph{every} representation of $\vct{u}$ with $s$ atoms requires at least $s/2$ atoms from $S$!  

To obtain an uncertainty principle, note that the relation~\eqref{eqn:overlap-bd} is quadratic in $t$.  By reverting this inequality, we obtain a sufficient condition on $t$ as a function of $s$ and $\delta$.

\begin{cor} \label{cor:simple-result}
With the notation of Theorem~\ref{thm:simple-result}, assume
$$
t < (s-\delta -1) \left[ \sqrt{ \left(1 + \frac{1}{s-\delta-1}\right) \frac{\mu^{-2}}{s-\delta-1}  + \frac{1}{4} } - 1 \right].
$$
Then $\Prob{ \vct{u} \in \range(\Fee_T) } = 0$.
\end{cor}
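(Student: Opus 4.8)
The plan is to revert the quadratic overlap condition of Theorem~\ref{thm:simple-result} into an explicit upper bound on $t$. Rather than manipulate the form~\eqref{eqn:overlap-bd} directly, which carries the awkward denominator $1 - t\mu^2$, I would return to the sufficient condition that actually drives that theorem, namely $t < r/[1 + (r-1)\mu^2]$ with $r = s + t - \delta$ (this is what Lemma~\ref{lem:simple-result} and Corollary~\ref{cor:generic-ld} require). Since $1 + (r-1)\mu^2$ is strictly positive, I can clear it without any case analysis, obtaining the equivalent polynomial inequality
\[
\mu^2 t^2 + \mu^2 (s - \delta - 1)\, t - (s - \delta) < 0.
\]
Dividing by $\mu^2$ and writing $b = s - \delta - 1$ puts this in the clean form $t^2 + b\,t - (b+1)\mu^{-2} < 0$.

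This is an upward-opening parabola in $t$, so the inequality holds exactly for $t$ strictly between its two roots. The smaller root is negative, and since only positive $t$ is relevant, the condition collapses to $0 < t < t_\star$, where
\[
t_\star = \tfrac{1}{2}\left[\, -b + \sqrt{ b^2 + 4(b+1)\mu^{-2} }\, \right].
\]
In the nontrivial range $b = s-\delta-1 > 0$ I would then factor $b$ out of the radical, using $\sqrt{1+4x} = 2\sqrt{\tfrac14 + x}$, to rewrite this as $t_\star = b\bigl[\sqrt{\tfrac14 + (1 + \tfrac1b)\mu^{-2}/b} - \tfrac12\bigr]$. This is precisely the expression displayed in the corollary, except that the trailing constant $\tfrac12$ has been replaced by $1$.

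Finally, because $b \geq 0$, replacing $-\tfrac12$ by $-1$ only decreases the threshold: the corollary's bound equals $t_\star - b/2 \le t_\star$. Hence any $t$ meeting the corollary's hypothesis also satisfies $t < t_\star$, which forces the quadratic inequality, which is equivalent to $t < r/[1+(r-1)\mu^2] \le \rank(\Fee_R)$ by Lemma~\ref{lem:simple-result}; Corollary~\ref{cor:generic-ld} then delivers $\Prob{\vct{u} \in \range(\Fee_T)} = 0$. The only genuine work is the root-finding algebra and the bookkeeping that matches $t_\star$ to the displayed formula. The point to handle carefully is that clearing the always-positive factor $1 + (r-1)\mu^2$—rather than the sign-indefinite factor $1 - t\mu^2$ appearing in~\eqref{eqn:overlap-bd}—sidesteps any side condition such as $t < \mu^{-2}$, so no additional hypotheses are needed; and one should record that for $b \le 0$ the stated bound is vacuous, so the interesting regime is $\delta \le s - 2$.
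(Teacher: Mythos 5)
Your proof is correct and follows essentially the paper's own (unwritten) route: the paper simply instructs the reader to revert the quadratic in $t$ from Theorem~\ref{thm:simple-result}, which is exactly what you do, and clearing the always-positive factor $1+(r-1)\mu^2$ rather than the sign-indefinite $1-t\mu^2$ is a clean way to avoid the spurious side condition $t<\mu^{-2}$. Your observation that the exact larger root carries a trailing $\tfrac12$ where the corollary displays a $1$, together with the monotonicity argument showing the stated (slightly smaller) threshold is still sufficient, correctly disposes of the one point where the displayed formula and the sharp root differ.
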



After a considerable amount of algebra, Corollary~\ref{cor:simple-result} simplifies to a new uncertainty principle.

\begin{cor}[Uncertainty Principle for Generic Signals] \label{cor:tropp-up}
Suppose $S$ is linearly independent.  A generic signal $\vct{u}$ in $\range(\Fee_S)$ almost surely has no representation over a set $T$ with overlap $\delta = \abs{S \cap T}$ \emph{unless}
$$
\abs{S} + \abs{T} > \delta + \mu^{-1}\sqrt{\abs{S} - \delta}.
$$
\end{cor}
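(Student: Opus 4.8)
The plan is to obtain Corollary~\ref{cor:tropp-up} as a clean, slightly weakened restatement of the sufficient condition already supplied by Corollary~\ref{cor:simple-result}. Write $s = \abs{S}$ and $t = \abs{T}$. Since $\abs{S}+\abs{T} = s+t$, the asserted principle is equivalent, after rearranging, to the assertion that $\Prob{\vct{u}\in\range(\Fee_T)} = 0$ whenever
$$
t \le \mu^{-1}\sqrt{s - \delta} - (s - \delta).
$$
Thus it suffices to show that this inequality forces the hypothesis of Corollary~\ref{cor:simple-result}, namely $t < (s-\delta-1)\bigl[\sqrt{(1 + \tfrac{1}{s-\delta-1})\tfrac{\mu^{-2}}{s-\delta-1} + \tfrac14} - 1\bigr]$; the corollary then delivers the conclusion directly (no contrapositive is needed, since ``no representation \emph{unless} $A>B$'' is just ``$A\le B \Rightarrow$ no representation'').

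First I would simplify the right-hand threshold of Corollary~\ref{cor:simple-result} by drawing the factor $s-\delta-1$ inside the radical. Abbreviating $n = s-\delta-1$, the inner expression is $\tfrac{(n+1)\mu^{-2}}{n^2} + \tfrac14$, so multiplying the square root by $n$ and squaring the factor in gives $n\sqrt{\tfrac{(n+1)\mu^{-2}}{n^2}+\tfrac14} = \sqrt{(n+1)\mu^{-2} + \tfrac{n^2}{4}}$. Restoring $n+1 = s-\delta$, the threshold collapses to the much cleaner
$$
\sqrt{(s-\delta)\,\mu^{-2} + \tfrac14 (s-\delta-1)^2} \; - \; (s-\delta-1).
$$
This is the only genuine piece of algebra, and it is the computation hinted at in the remark preceding the statement.

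Next I would compare the two thresholds. Discarding the nonnegative term $\tfrac14(s-\delta-1)^2$ under the radical gives the trivial bound $\sqrt{(s-\delta)\mu^{-2}+\tfrac14(s-\delta-1)^2} \ge \mu^{-1}\sqrt{s-\delta}$, so the simplified threshold is at least $\mu^{-1}\sqrt{s-\delta} - (s-\delta-1) = \mu^{-1}\sqrt{s-\delta} - (s-\delta) + 1$. Hence any $t$ with $t \le \mu^{-1}\sqrt{s-\delta} - (s-\delta)$ lies \emph{strictly} below the threshold of Corollary~\ref{cor:simple-result}, which therefore applies and yields $\Prob{\vct{u}\in\range(\Fee_T)} = 0$. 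This is precisely the assertion of the corollary once it is phrased with the word \emph{unless}.

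I do not expect a serious obstacle; the content is the bookkeeping above, the key observation being that after the simplification the comparison reduces to dropping a single nonnegative term. The one point requiring care is the degenerate range $s-\delta \le 1$, where the expression in Corollary~\ref{cor:simple-result} carries a removable singularity (its denominator $s-\delta-1$ vanishes). When $s-\delta = 0$ we have $S\subseteq T$, so $\range(\Fee_S)\subseteq\range(\Fee_T)$ and the bound $\abs{S}+\abs{T} > \delta$ holds automatically, making the claim vacuous; when $s-\delta = 1$ the conclusion can be read directly off Theorem~\ref{thm:simple-result} (equivalently, off the simplified threshold, which equals $\mu^{-1}$ there). With these two cases dispatched, the comparison for $s-\delta \ge 2$ completes the argument.
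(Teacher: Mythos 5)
Your derivation is correct and follows exactly the route the paper intends: Corollary~\ref{cor:tropp-up} is obtained from Corollary~\ref{cor:simple-result} by the algebraic simplification you carry out (folding $s-\delta-1$ into the radical and then discarding the nonnegative term $\tfrac14(s-\delta-1)^2$), which is precisely the ``considerable amount of algebra'' the paper alludes to but does not display. Your treatment of the degenerate cases $s-\delta\in\{0,1\}$ is a sensible addition that the paper glosses over.
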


Theorem~\ref{thm:strong-intro} follows from Corollary~\ref{cor:tropp-up} by setting $\delta = 0$.

\section{Sparsity Gap under Weak Incoherence} \label{sec:weak-incoherence}

In this section, we use a different technique to bound the rank of $\Fee_R$ below.  This approach relies on the Schur complement identity~\eqref{eqn:schur-rank-ident} and the second part of Corollary~\ref{cor:alon-bourgain}.  We also require some information about the properties of random sets of atoms, drawn from~\cite{Tro08:Conditioning-Random,Tro08:Norms-Random}.  The conclusions are most interesting for weakly incoherent dictionaries.

Assume $S$ is linearly independent, which implies that the Gram matrix $\Fee_S^\adj \Fee_S$ is invertible.  Let $V$ be an index set disjoint from $S$, and write $R = S \cup V$.  Then the Gram matrix has the block structure
$$
\Fee_R^\adj \Fee_R = \begin{bmatrix}
	\Fee_S^\adj \Fee_S & \Fee_S^\adj \Fee_V \\
	\Fee_V^\adj \Fee_S & \Fee_V^\adj \Fee_V
\end{bmatrix}.
$$
The identity~\eqref{eqn:schur-rank-ident} shows that we can control the rank of the Gram matrix by controlling the rank of the northwest block and its Schur complement.  To state the result, we recall that
\begin{equation} \label{eqn:orth-proj}
\mtx{P}_S = \Fee_S (\Fee_S^\adj \Fee_S)^{-1} \Fee_S^\adj
\end{equation}
is the orthogonal projector onto $\range(\Fee_S)$.

\begin{lemma} \label{lem:schur-rank-decomp}
Suppose that $S$ is linearly independent; let $V$ be disjoint from $S$; and define $R = S \cup V$.  Then
$$
\rank(\Fee_R) = \abs{S} + \rank( (\Id - \mtx{P}_S) \Fee_V ).
$$
\end{lemma}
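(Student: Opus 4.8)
The plan is to pass to the Gram matrix and apply the Schur complement rank identity~\eqref{eqn:schur-rank-ident} with the northwest block playing the role of $\mtx{A}$. Since $\rank(\Fee_R) = \rank(\Fee_R^\adj \Fee_R)$ by~\eqref{eqn:rank-gram}, I would work throughout with the block Gram matrix
$$
\Fee_R^\adj \Fee_R = \begin{bmatrix}
	\Fee_S^\adj \Fee_S & \Fee_S^\adj \Fee_V \\
	\Fee_V^\adj \Fee_S & \Fee_V^\adj \Fee_V
\end{bmatrix},
$$
and set $\mtx{A} = \Fee_S^\adj \Fee_S$. Because $S$ is linearly independent, $\Fee_S$ is injective, so $\mtx{A}$ is nonsingular and the Schur complement is well defined; moreover $\rank(\mtx{A}) = \abs{S}$.

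The second step is to compute the Schur complement explicitly and recognize it as a Gram matrix. Substituting the blocks gives
$$
(\Fee_R^\adj \Fee_R)/\mtx{A}
  = \Fee_V^\adj \Fee_V - \Fee_V^\adj \Fee_S (\Fee_S^\adj \Fee_S)^{-1} \Fee_S^\adj \Fee_V
  = \Fee_V^\adj (\Id - \mtx{P}_S) \Fee_V,
$$
where I have inserted the definition~\eqref{eqn:orth-proj} of the orthogonal projector $\mtx{P}_S$. The key observation is that $\Id - \mtx{P}_S$ is itself an orthogonal projector, hence idempotent and self-adjoint, so the Schur complement factors as
$$
(\Fee_R^\adj \Fee_R)/\mtx{A}
  = \bigl[ (\Id - \mtx{P}_S) \Fee_V \bigr]^\adj \bigl[ (\Id - \mtx{P}_S) \Fee_V \bigr].
$$

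With the factorization in hand, I would invoke the rank-of-Gram identity~\eqref{eqn:rank-gram} once more to conclude that $\rank\bigl( (\Fee_R^\adj \Fee_R)/\mtx{A} \bigr) = \rank\bigl( (\Id - \mtx{P}_S) \Fee_V \bigr)$. Assembling the pieces through~\eqref{eqn:schur-rank-ident} then yields
$$
\rank(\Fee_R) = \rank(\mtx{A}) + \rank\bigl( (\Fee_R^\adj \Fee_R)/\mtx{A} \bigr)
  = \abs{S} + \rank\bigl( (\Id - \mtx{P}_S) \Fee_V \bigr),
$$
which is the claim. There is no serious obstacle here: every ingredient is already assembled in the preceding subsections, and the only point requiring any thought is the recognition that the Schur complement is the Gram matrix of the projected atoms $(\Id - \mtx{P}_S) \Fee_V$, which follows immediately from the idempotence and self-adjointness of the complementary projector.
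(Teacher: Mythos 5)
Your proposal is correct and follows essentially the same route as the paper: pass to the Gram matrix, apply the Schur complement rank identity with $\Fee_S^\adj \Fee_S$ as the northwest block, identify the Schur complement as $\Fee_V^\adj(\Id - \mtx{P}_S)\Fee_V$, and use idempotence of $\Id - \mtx{P}_S$ to equate its rank with that of $(\Id - \mtx{P}_S)\Fee_V$. The only cosmetic difference is that you spell out the Gram factorization explicitly where the paper simply cites $(\Id - \mtx{P}_S)^2 = (\Id - \mtx{P}_S)$.
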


\begin{IEEEproof}
The Schur complement identity~\eqref{eqn:schur-rank-ident} shows that
$$
\rank(\Fee_R^\adj \Fee_R) = \rank( \Fee_S^\adj \Fee_S ) + \rank(\Fee_R^\adj \Fee_R / \Fee_S^\adj \Fee_S).
$$
The set $S$ is linearly independent, so
$$
\rank(\Fee_S^\adj \Fee_S) = \rank(\Fee_S) = \abs{S}.
$$
Next, recall the definition of the Schur complement:
$$
\Fee_R^\adj \Fee_R / \Fee_S^\adj \Fee_S
	= \Fee_V^\adj \Fee_V - \Fee_V^\adj \Fee_S (\Fee_S^\adj \Fee_S)^{-1} \Fee_S^\adj \Fee_V.
$$
Identify the orthogonal projector onto $\range(\Fee_S)$ to see that
$$
\Fee_R^\adj \Fee_R / \Fee_S^\adj \Fee_S = \Fee_V^\adj (\Id - \mtx{P}_S) \Fee_V.
$$
We conclude that
$$
\rank( \Fee_R^\adj \Fee_R / \Fee_S^\adj \Fee_S )
	= \rank( (\Id - \mtx{P}_S) \Fee_V )
$$
because $(\Id - \mtx{P}_S)^2 = (\Id - \mtx{P}_S)$.
\end{IEEEproof}

The next result provides a lower bound on the second term in Lemma~\ref{lem:schur-rank-decomp}.  Here and elsewhere, the dagger $\psinv$ denotes the pseudoinverse of a matrix.

\begin{lemma} \label{lem:schur-comp-rank}
Suppose that $S$ is linearly independent, and let $V$ be disjoint from $S$.  Then
\begin{multline*}
\rank( (\Id - \mtx{P}_S) \Fee_V ) \geq \\
	\rho^{-1} \abs{V}
	\left(1 - \smnorm{}{\Fee_S^\psinv}^2 \max\nolimits_{v \notin S} \enormsq{ \Fee_S^\adj \atom_v } \right).
\end{multline*}
\end{lemma}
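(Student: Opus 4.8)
The plan is to invoke the second rank estimate from Corollary~\ref{cor:alon-bourgain}. Abbreviate $\mtx{M} = (\Id - \mtx{P}_S)\Fee_V$; then
\[
\rank(\mtx{M}) \geq \frac{\fnormsq{\mtx{M}}}{\normsq{\mtx{M}}},
\]
so the whole task reduces to bounding the spectral norm $\normsq{\mtx{M}}$ from above and the Frobenius norm $\fnormsq{\mtx{M}}$ from below. If the parenthesized factor in the statement is nonpositive, the inequality is trivial because rank is nonnegative, so I may assume it is positive; this in turn forces $\mtx{M} \neq \zeromtx$, which is what makes the ratio above legitimate.

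The spectral bound is immediate. Submultiplicativity gives $\norm{\mtx{M}} \leq \norm{\Id - \mtx{P}_S}\,\norm{\Fee_V}$. The factor $\Id - \mtx{P}_S$ is an orthogonal projector, so its spectral norm is at most one, while $\Fee_V$ is a column submatrix of $\Fee$ and hence $\norm{\Fee_V} \leq \norm{\Fee} = \sqrt{\rho}$. Thus $\normsq{\mtx{M}} \leq \rho$, which is exactly the denominator I want.

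The Frobenius bound is where the real content lies. I would expand column by column: $\fnormsq{\mtx{M}} = \sum_{v \in V} \enormsq{(\Id - \mtx{P}_S)\atom_v}$. Since $\mtx{P}_S$ is an orthogonal projector and $\enorm{\atom_v} = 1$, the Pythagorean relation gives $\enormsq{(\Id - \mtx{P}_S)\atom_v} = 1 - \enormsq{\mtx{P}_S \atom_v}$. Writing $\mtx{P}_S = \Fee_S (\Fee_S^\adj\Fee_S)^{-1}\Fee_S^\adj$ as in~\eqref{eqn:orth-proj} and setting $\vct{w}_v = \Fee_S^\adj \atom_v$, I bound the quadratic form $\enormsq{\mtx{P}_S \atom_v} = \vct{w}_v^\adj (\Fee_S^\adj\Fee_S)^{-1} \vct{w}_v \leq \norm{(\Fee_S^\adj\Fee_S)^{-1}}\,\enormsq{\vct{w}_v}$. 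The key algebraic identity is $\norm{(\Fee_S^\adj\Fee_S)^{-1}} = \smnorm{}{\Fee_S^\psinv}^2$, which holds because $S$ is linearly independent: then $\Fee_S^\psinv = (\Fee_S^\adj\Fee_S)^{-1}\Fee_S^\adj$ and $\smnorm{}{\Fee_S^\psinv} = \smin(\Fee_S)^{-1}$. Replacing $\enormsq{\vct{w}_v}$ by the uniform upper bound $\max_{v \notin S} \enormsq{\Fee_S^\adj \atom_v}$ (legitimate because $V$ is disjoint from $S$) makes every one of the $\abs{V}$ summands share a common lower bound, giving $\fnormsq{\mtx{M}} \geq \abs{V}\bigl(1 - \smnorm{}{\Fee_S^\psinv}^2 \max_{v \notin S} \enormsq{\Fee_S^\adj \atom_v}\bigr)$.

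Dividing this Frobenius lower bound by the spectral upper bound $\rho$ yields the claimed inequality. No step presents a serious obstacle; the only place demanding care is the quadratic-form estimate for $\enormsq{\mtx{P}_S \atom_v}$ and its conversion into the pseudoinverse norm $\smnorm{}{\Fee_S^\psinv}$, together with the bookkeeping that renders the per-column estimate uniform so that the sum contributes the clean factor $\abs{V}$.
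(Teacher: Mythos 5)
Your argument is correct and follows essentially the same route as the paper: the second estimate of Corollary~\ref{cor:alon-bourgain} applied to $(\Id - \mtx{P}_S)\Fee_V$, the spectral bound $\normsq{(\Id-\mtx{P}_S)\Fee_V} \leq \normsq{\Fee_V} \leq \rho$, and the column-by-column Frobenius lower bound using the Pythagorean identity and the pseudoinverse norm. Your extra remarks (the trivial case when the parenthesized factor is nonpositive, and the identity $\norm{(\Fee_S^\adj\Fee_S)^{-1}} = \smnorm{}{\Fee_S^\psinv}^2$) only make explicit details the paper leaves implicit.
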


\begin{IEEEproof}
Corollary~\ref{cor:alon-bourgain} states that
\begin{equation} \label{eqn:schur-bourgain}
\rank( (\Id - \mtx{P}_S)\Fee_V )
	\geq \frac{\fnormsq{(\Id - \mtx{P}_S)\Fee_V}}{\normsq{(\Id - \mtx{P}_S)\Fee_V}}.
\end{equation}
The spectral norm satisfies the bound
$$
\normsq{ (\Id - \mtx{P}_S) \Fee_V } \leq \normsq{ \Fee_V } \leq \rho
$$
because $\Fee_V$ is a column submatrix of $\Fee$.  With a little more work, we can produce a lower bound on the Frobenius norm.
\begin{align*}
\fnormsq{ (\Id - \mtx{P}_S) \Fee_V }
	&= \sum\nolimits_{v \in V} \enormsq{ (\Id - \mtx{P}_S) \atom_v } \\
	&= \sum\nolimits_{v \in V} (1 - \enormsq{ \mtx{P}_S \atom_v } ) \\
	&\geq \sum\nolimits_{v \in V} \left(1 - \smnorm{}{ \Fee_S^\psinv }^2 \enormsq{ \Fee_S^\adj \atom_v } \right) \\
	&\geq \abs{V} \left(1 - \smnorm{}{ \Fee_S^\psinv }^2 \max_{v \notin S} \enormsq{ \Fee_S^\adj \atom_v } \right).
\end{align*}
The first inequality follows by expanding the projector via~\eqref{eqn:orth-proj} and invoking the usual operator norm bound.  Introduce the two norm estimates into~\eqref{eqn:schur-bourgain} to complete the argument.
\end{IEEEproof}

Lemma~\ref{lem:schur-comp-rank} is interesting because a \emph{random} set $S$ of atoms from a \emph{weakly incoherent} dictionary usually leads to small values for the mysterious quantities in the rank bound.

\begin{prop} \label{prop:tropp}
Suppose that $S$ is a uniformly random subset of $\{1,2,\dots, s\}$ with cardinality $s$.  For $\beta \geq 1$,
$$
\Prob{ \max_{v \notin S} \enorm{ \Fee_S^\adj \atom_v }
	> \cnst{C} \beta \left[ \mu \sqrt{\log N} + 2\sqrt{\frac{\rho s}{N}} \right] } \leq 2N^{-\beta}.
$$
and
$$
\Prob{ \smnorm{}{ \Fee_S^\adj \Fee_S - \Id } < \cnst{C} \beta \left[ \mu \log N + \sqrt{\frac{\rho s \log N}{N}} \right] } \leq 2N^{-\beta}.
$$
The number $\cnst{C}$ is an absolute constant.
\end{prop}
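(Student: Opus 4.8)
The plan is to treat both statements as concentration inequalities for a random column submatrix $\Fee_S$, following the machinery developed in~\cite{Tro08:Conditioning-Random,Tro08:Norms-Random}. The two bounds have quite different characters: the first controls a scalar quantity for each fixed column and succumbs to an elementary argument, whereas the second controls a matrix spectral norm and is where the real work lies. Throughout I model $S$ as a uniformly random $s$-subset of $\{1,2,\dots,N\}$.

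For the first bound, I would fix an index $v \notin S$ and expand
$$
\enormsq{ \Fee_S^\adj \atom_v } = \sum\nolimits_{j \in S} \abssqip{ \atom_j }{ \atom_v }.
$$
Each summand is nonnegative and bounded by $\mu^2$ via the coherence, while the total over all $j \neq v$ is at most $\enormsq{ \Fee^\adj \atom_v } \leq \rho$ because $\atom_v$ is a unit vector and $\normsq{\Fee} = \rho$. Hence the expectation is on the order of $\rho s / N$. I would then apply a Bernstein/Bennett inequality for sums of bounded random variables (sampling without replacement) to show that $\enormsq{ \Fee_S^\adj \atom_v }$ exceeds a constant multiple of $\rho s/N + \mu^2 \log N$ only with probability at most $N^{-\beta-1}$. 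Taking square roots and using $\sqrt{a+b} \leq \sqrt a + \sqrt b$ produces the advertised form $\mu \sqrt{\log N} + \sqrt{\rho s/N}$; the cross term is the geometric mean of these two quantities and so is dominated by their sum. A union bound over the at most $N$ choices of $v$ costs a factor $N$, absorbed by the spare power of $N^{-1}$, which yields the stated failure probability $2N^{-\beta}$.

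The second bound asks for the spectral norm of the centered Gram matrix $\Fee_S^\adj \Fee_S - \Id$ of a random submatrix---equivalently, the hollow (off-diagonal) part of the Gram matrix, since the diagonal is exactly $\Id$. This is a genuine matrix concentration problem and is the main obstacle. The strategy I would follow is the one in~\cite{Tro08:Conditioning-Random,Tro08:Norms-Random}: (i) pass to the Bernoulli selector model and symmetrize, introducing a Rademacher sequence; (ii) bound the expected spectral norm via Rudelson's selection lemma (equivalently, the noncommutative Khintchine inequality), which produces the $\mu \log N$ term from the coherence and the $\sqrt{\rho s \log N / N}$ term from the operator norm of the full Gram matrix scaled by the sampling ratio $s/N$; and (iii) upgrade the expectation estimate to a tail bound by a concentration-of-measure argument for the spectral norm (a bounded-difference or matrix-martingale inequality), so that deviations above the mean decay like $N^{-\beta}$.

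The hardest step is (ii): controlling $\Expect \smnorm{}{ \Fee_S^\adj \Fee_S - \Id }$ requires the noncommutative Khintchine machinery together with a careful split into the coherence-dominated and norm-dominated regimes that yield the two separate terms. Everything else---the scalar Bernstein estimate, the square-root manipulations, and the union bounds---is routine by comparison. Since these conditioning and norm results are already established in the cited papers, the proof here is chiefly a matter of quoting them in the present normalization (unit-norm atoms, redundancy $\rho = \normsq{\Fee}$, coherence $\mu$) and checking that the various constants combine into a single absolute constant $\cnst{C}$.
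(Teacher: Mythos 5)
Your overall plan matches the paper's: the paper gives no self-contained argument here, but simply quotes the moment bounds of \cite[Sec.~5]{Tro08:Norms-Random} (which are themselves proved by the symmetrization / noncommutative Khintchine--Rudelson machinery you describe), converts them to tail bounds, and transfers from the Bernoulli selector model to a set of exact cardinality $s$ via a decoupling argument modeled on \cite[Lem.~14]{Tro08:Linear-Independence}. Your reconstruction of the expectation-level estimates (the $\mu$-dominated and $\rho s/N$-dominated terms) and your scalar Bernstein-plus-union-bound treatment of $\max_{v\notin S}\enorm{\Fee_S^\adj\atom_v}$ are consistent with that source.

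The one place you diverge, and where your proposed tool would likely fail, is step (iii) for the spectral-norm bound. The paper obtains the tail by applying Markov's inequality directly to the $L_p$ moment bounds with $p \asymp \beta\log N$; the $\log N$ factors in the statement come precisely from this choice of $p$ inside the noncommutative Khintchine estimate. You instead propose to bound the expectation and then upgrade it with a bounded-difference or matrix-martingale inequality. A bounded-difference argument for $\smnorm{}{\Fee_S^\adj\Fee_S - \Id}$ under swapping a single element of $S$ has increments as large as the norm of the row and column being exchanged, i.e.\ on the order of $\sqrt{s}\,\mu$, so the resulting deviation term $\sqrt{s}\,\mu\sqrt{\log N}$ swamps the claimed $\mu\log N + \sqrt{\rho s\log N/N}$ once $s \gg \log N$ --- exactly the regime~\eqref{eqn:s-bound-intro} the proposition is designed for. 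If you want to avoid quoting the moment bounds verbatim, you need the Markov-on-$p$th-moments route (or a genuinely matrix-valued concentration inequality applied to the sum itself), not scalar concentration of the norm around its mean. You should also make the return trip explicit: after passing to the Bernoulli model to symmetrize, the statement is about a uniformly random set of \emph{exact} cardinality $s$, and the paper handles that transfer with the cited decoupling lemma rather than leaving it implicit.
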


Proposition~\ref{prop:tropp} follows from the results in~\cite[Sec.~5]{Tro08:Norms-Random} after some standard arguments.  We use Markov's inequality to convert the moment bounds into tail bounds.  Then we invoke a simple decoupling argument (see~\cite[Lem.~14]{Tro08:Linear-Independence} for a model) to transfer the result for a random set with \emph{expected} cardinality $s$ to a random set with exact cardinality $s$.  

To take advantage of Proposition~\ref{prop:tropp}, we restrict our attention to a specific class of dictionaries.  Assume that $\Fee$ is a weakly incoherent tight frame, as defined in~\eqref{eqn:weak-incoherence-intro}.  Suppose furthermore that the sparsity $s$ satisfies~\eqref{eqn:s-bound-intro}.
If we fix a sufficiently small value for constant $\cnst{c}$ and take $\beta = 1$, Proposition~\ref{prop:tropp} ensures that
$$
\max\nolimits_{v \notin S} \enorm{ \Fee_S^\adj \atom_v } \leq 1/2 \quad\text{and}\quad
\smnorm{}{ \Fee_S^\psinv } \leq \sqrt{2},
$$
except with probability $\bigO(N^{-1})$.  The upshot of this discussion is the following bound.

\begin{cor} \label{cor:weak-incoherence-rank-2}
Assume the hypotheses of Lemma~\ref{lem:schur-comp-rank} are in force.  Let $R = S \cup V$.  Then
$$
\rank( \Fee_R ) \geq \abs{S} + \frac{2m\abs{V}}{N}.
$$
\end{cor}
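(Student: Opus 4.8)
The plan is to chain the two structural lemmas of this section together with the probabilistic norm estimates recorded just above. The argument is almost purely algebraic: Lemma~\ref{lem:schur-rank-decomp} strips off the contribution of $S$, Lemma~\ref{lem:schur-comp-rank} lower-bounds the remaining rank in terms of two scalar quantities, and Proposition~\ref{prop:tropp} certifies that those two quantities are small on an event of probability $1 - \bigO(N^{-1})$.

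First I would apply Lemma~\ref{lem:schur-rank-decomp} to write
$$
\rank( \Fee_R ) = \abs{S} + \rank( (\Id - \mtx{P}_S) \Fee_V ),
$$
which reduces everything to a lower bound on the rank of the projected block $(\Id - \mtx{P}_S) \Fee_V$. Second, I would feed this block directly into Lemma~\ref{lem:schur-comp-rank} to obtain
$$
\rank( (\Id - \mtx{P}_S) \Fee_V ) \geq \rho^{-1} \abs{V} \left( 1 - \smnorm{}{\Fee_S^\psinv}^2 \, \max\nolimits_{v \notin S} \enormsq{ \Fee_S^\adj \atom_v } \right).
$$
Everything now hinges on showing that the parenthetical factor is bounded below by a fixed positive constant, so that the gain $\rho^{-1} \abs{V}$ survives essentially intact.

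Third, I would invoke the two high-probability bounds established after Proposition~\ref{prop:tropp}: for a weakly incoherent tight frame whose coherence constant in~\eqref{eqn:weak-incoherence-intro} is chosen small enough and whose sparsity obeys~\eqref{eqn:s-bound-intro}, the event
$$
\max\nolimits_{v \notin S} \enorm{ \Fee_S^\adj \atom_v } \leq \tfrac{1}{2}
\quad\text{and}\quad
\smnorm{}{ \Fee_S^\psinv } \leq \sqrt{2}
$$
fails only with probability $\bigO(N^{-1})$. On this event the product $\smnorm{}{\Fee_S^\psinv}^2 \max_{v \notin S} \enormsq{\Fee_S^\adj \atom_v}$ is at most a constant strictly less than one, so the parenthetical factor exceeds a fixed positive constant. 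Finally, I would insert the tight-frame identity $\rho = N/m$, turning $\rho^{-1} \abs{V}$ into a constant multiple of $m \abs{V}/N$; combined with the decomposition from Lemma~\ref{lem:schur-rank-decomp} this delivers a bound of the advertised shape $\rank( \Fee_R ) \geq \abs{S} + (\text{const}) \cdot m \abs{V}/N$.

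The one genuine obstacle is the third step. I must guarantee that $\smnorm{}{\Fee_S^\psinv}^2$ and $\max_{v \notin S} \enormsq{\Fee_S^\adj \atom_v}$ are \emph{simultaneously} small enough that their product stays bounded away from one; otherwise the parenthetical factor could collapse and destroy the linear-in-$\abs{V}$ gain. This is precisely the leverage that shrinking the coherence constant $\cnst{c}$ buys us through Proposition~\ref{prop:tropp}: a smaller $\cnst{c}$ drives $\max_{v \notin S} \enorm{\Fee_S^\adj \atom_v}$ down and keeps $\Fee_S^\adj \Fee_S$ close enough to $\Id$ that $\smnorm{}{\Fee_S^\psinv}^2 \leq 2$. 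With that in hand, the remaining arithmetic is a single substitution.
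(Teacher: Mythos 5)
Your proposal follows exactly the paper's (implicit) argument: Lemma~\ref{lem:schur-rank-decomp} to peel off $\abs{S}$, Lemma~\ref{lem:schur-comp-rank} to lower-bound the remaining rank, the high-probability estimates following Proposition~\ref{prop:tropp} to control the parenthetical factor, and the tight-frame identity $\rho = N/m$. The one loose end is that you leave the constant unspecified; note that with the paper's stated numerical bounds ($\max_{v \notin S}\enorm{\Fee_S^\adj \atom_v} \leq 1/2$ and $\smnorm{}{\Fee_S^\psinv} \leq \sqrt{2}$) the parenthetical factor equals $1 - 2 \cdot \tfrac{1}{4} = \tfrac{1}{2}$, which gives $m\abs{V}/(2N)$ rather than the $2m\abs{V}/N$ claimed in the corollary---a constant discrepancy in the paper itself rather than a flaw in your reasoning, but one you would need to resolve to obtain the stated bound.
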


We reach our final major result by introducing this bound into Corollary~\ref{cor:generic-ld}.

\begin{thm} \label{thm:sparsity-gap}
Assume that $\Fee$ satisfies the weak incoherence conditions~\eqref{eqn:weak-incoherence-intro}, and assume further that $N > 2m$.  Draw a random set $S$ of $s$ atoms, where $s$ satisfies~\eqref{eqn:s-bound-intro}.  Except with probability $\bigO(N^{-1})$, the following result holds.

Suppose that $T$ lists $t$ atoms.  Let $\delta = \abs{S \cap T}$ be the size of the overlap with $S$, and assume that
$$
t < \left( s - \frac{2 \delta m}{N} \right) \left( 1 - \frac{2m}{N} \right)^{-1}.
$$
Draw a generic signal $\vct{u} = \Fee_S \vct{x}$.  Then 
$$
\Prob{ \vct{u} \in \range(\Fee_T)} = 0.
$$
A fortiori, there is zero probability that $\vct{u}$ has a representation over \emph{any} such set $T$ of atoms.
\end{thm}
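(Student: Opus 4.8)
The plan is to combine the rank lower bound of Corollary~\ref{cor:weak-incoherence-rank-2} with the probabilistic dichotomy of Corollary~\ref{cor:generic-ld}, so that the only genuine content is an elementary rearrangement of the rank inequality. First I would isolate the high-probability event. By Proposition~\ref{prop:tropp}, applied with $\beta = 1$ and a sufficiently small value of $\cnst{c}$, we have both $\max_{v \notin S} \enorm{\Fee_S^\adj \atom_v} \le 1/2$ and $\smnorm{}{\Fee_S^\psinv} \le \sqrt{2}$ except with probability $\bigO(N^{-1})$ over the random draw of $S$. The second bound forces $\smin(\Fee_S) > 0$, so $S$ is linearly independent and the hypotheses of Lemma~\ref{lem:schur-comp-rank}, hence of Corollary~\ref{cor:weak-incoherence-rank-2}, are in force. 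Crucially, this event is a property of $S$ alone, so a single failure probability of $\bigO(N^{-1})$ will cover every candidate set $T$ simultaneously.

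Working on this good event, I would fix any $T$ with $\abs{S \cap T} = \delta$ and set $V = T \setminus S$, which is disjoint from $S$ and satisfies $R = S \cup T = S \cup V$ and $\abs{V} = t - \delta$. Corollary~\ref{cor:weak-incoherence-rank-2} then supplies
$$
\rank(\Fee_R) \ge \abs{S} + \frac{2m\abs{V}}{N} = s + \frac{2m(t - \delta)}{N}.
$$
Invoking Corollary~\ref{cor:generic-ld} requires $t < \rank(\Fee_R)$, so it suffices to enforce $t < s + 2m(t-\delta)/N$. Rearranging gives $t\,(1 - 2m/N) < s - 2m\delta/N$, and since $N > 2m$ the factor $1 - 2m/N$ is strictly positive, so dividing preserves the inequality and reproduces exactly the stated hypothesis
$$
t < \left( s - \frac{2\delta m}{N} \right)\left(1 - \frac{2m}{N}\right)^{-1}.
$$
Whenever $t$ obeys this bound, Corollary~\ref{cor:generic-ld} yields $\Prob{\vct{u} \in \range(\Fee_T)} = 0$ for the generic signal $\vct{u} = \Fee_S \vct{x}$. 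Because only finitely many sets $T$ of $t$ atoms meet the hypothesis, a union bound over this finite collection of zero-probability events establishes the a fortiori claim.

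The arithmetic here is routine; the point demanding care is the bookkeeping of the two separate sources of randomness. The $\bigO(N^{-1})$ term quantifies the chance that the random index set $S$ is atypical, and since that event references $S$ but never $T$, it must be extracted once and then held fixed uniformly across all admissible $T$. The remaining almost-sure conclusions are statements about the generic coefficient vector $\vct{x}$, conditioned on a good $S$. Verifying that the good event genuinely does not depend on $T$, and keeping these two probabilistic layers cleanly separated, is the only subtle step in the argument.
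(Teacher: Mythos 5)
Your proposal is correct and follows essentially the same route as the paper: invoke the high-probability event for the random set $S$ (via Proposition~\ref{prop:tropp} and Corollary~\ref{cor:weak-incoherence-rank-2}), apply Corollary~\ref{cor:generic-ld} with $V = T\setminus S$ and $\abs{V} = t-\delta$, and rearrange $t < s + 2m(t-\delta)/N$ using $N > 2m$. Your version is, if anything, slightly more careful about the bookkeeping—you correctly write the rank bound as $\rank(\Fee_R) \ge s + 2m\abs{V}/N$ where the paper's displayed equation~\eqref{eqn:gap-rank-bd} appears to drop the leading $s$, and you make explicit that the good event depends only on $S$—but the argument is the same.
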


\begin{IEEEproof}
Draw a random set $S$ of $s$ atoms.  Corollary~\ref{cor:weak-incoherence-rank-2} guarantees that, with probability $\bigO(N^{-1})$, the following result holds.  For any given set $T$ of $t$ atoms, define $V = T \setminus S$ and select $R = S \cup V$.  Then
\begin{equation} \label{eqn:gap-rank-bd}
\frac{2m\abs{V}}{N} \leq \rank( \Fee_R ).
\end{equation}

Now, suppose that $T$ is a specific set of $t$ atoms.  According to Corollary~\ref{cor:generic-ld}, the condition $t < \rank( \Fee_R )$ implies that a generic signal in $\range(\Fee_S)$ almost surely has no representation using the atoms in $T$.  It follows from~\eqref{eqn:gap-rank-bd} that
$$
t < s + \frac{2m\abs{V}}{N}
$$
is a stricter sufficient condition that $\vct{u}$ almost surely cannot be represented with $T$.  Introduce the identity $\abs{V} = t - \delta$ into the last relation and rearrange to obtain the equivalent condition
$$
t < \left(s - \frac{2\delta m}{N} \right) \left( 1 - \frac{2m}{N} \right)^{-1}.
$$
As usual, we may take a union bound over the finite number of admissible $T$ to obtain a uniform result.
\end{IEEEproof}


The message may be clearer if we make additional simplifications to the sufficient condition in Theorem~\ref{thm:sparsity-gap}.  First, subtract and add $2sm/N$ in the first parenthesis to reach
$$
t < \left[ s \left(1- \frac{2 m}{N}\right) + \frac{2(s - \delta)m}{N} \right]
	\left( 1 -\frac{2 m}{N} \right)^{-1}.
$$
Since $1 < (1 - 2 m/N)^{-1}$, we find that a further sufficient condition is
\begin{equation} \label{eqn:sparsity-gap}
t \leq s + \frac{2(s-\delta)m}{N}
\end{equation}
In words, there is zero probability that a generic signal in $\range(\Fee_S)$ has a representation using $t$ atoms unless $t$ is somewhat larger than $s$ or the alternative representation uses many atoms from $S$.  Take $\delta = 0$ in~\eqref{eqn:sparsity-gap} to reach Theorem~\ref{thm:weak-intro}.

\bibliographystyle{IEEEtran}
\bibliography{IEEEabrv,sparsity-gap}

\end{document}